\begin{document}
\title{\LARGE \bf   Benefits of Coding on Age of Information in  Broadcast Networks}
 \author{Xingran Chen and Shirin Saeedi Bidokhti\\
	\small 
	 Department of Electrical and Systems Engineering\\
University of Pennsylvania\\ \{xingranc, saeedi\}@seas.upenn.edu\\	
}

\IEEEoverridecommandlockouts
\newtheorem{lemma}{Lemma}
\newtheorem{note}{Note}
\newtheorem{property}{Property}
\newtheorem{theorem}{Theorem}
\newtheorem{definition}{Definition}
\newtheorem{corollary}{Corollary}
\newtheorem{proposition}{Proposition}
\newtheorem{remark}{Remark}
\newtheorem{assumption}{Assumption}

\maketitle
\thispagestyle{empty}
\pagestyle{empty}

\section*{Abstract}
 Age of Information (AoI) is studied in two-user broadcast networks with feedback, and lower and upper bounds are derived on the expected weighted sum AoI of the users. In particular, a class of simple coding actions is considered and within this class, randomized and deterministic policies are devised. Explicit conditions are found for symmetric dependent channels under which coded randomized policies strictly outperform the corresponding uncoded policies. Similar behavior is numerically shown for  deterministic policies.

\begin{IEEEkeywords}
Age of Information, Network Coding, Randomized Policy,
Max-Weight Policy, Feedback, Broadcast Packet Erasure Channels
\end{IEEEkeywords}

\section{Introduction}\label{sec: Introduction}

Sending status updates in a timely manner has significant importance in the Internet of Things (IoT) applications. In practice, it is not always  effective to  update the information as fast as possible for it may cause further delay in the network queues. To measure the timeliness of  information at a remote system, the concept of Age of Information (AoI) was introduced  in \cite{Tutorial-Kaul, 2011-S.Kaul, 2012-S.Kaul}. AoI measures, at the receiving side, how much time has passed since the generation time of the latest received packet. In \cite{2012-S.Kaul}, a single source and server setup were considered under First-Come First-Serve (FCFS) queue management and it was shown that there is an optimal update rate that minimizes time-average AoI. Further extensions to networks of multiple sources and servers with and without packet management were studied in \cite{2012-S.K.Kaul, 2014-C.Kam, 2016-C.Kam, 2016-M.Costa, 2019-R.D.Yates}. More recently, AoI has been studied as a performance metric  in various contexts such as  source and channel coding \cite{2018-J.Zhong, 2018-P.Mayekar, 2018-R.Devassy}, caching \cite{ 2017-C.Kam, 2017-R.D.Yates, 2018-S.Zhang}, energy harvesting \cite{2018-A.Arafa, 2018-A.Arafa1, 2018-S.Feng}, sampling \cite{2018-J.Yun, 2018-1806.03487, 2019-TasmeenZaman} and scheduling \cite{2018-Q.He, 2018-C.Joo, 2015-B.Li, 2016-B.Li, 2018-I.Kadota, 2018-IgorKadota, 2018-R.Talak, 2018-R.Talak1, 2018-R.Talak2}.

In coding theory, previous work has mainly studied point to point channels. For example, \cite{2016-K.Chen,2017-R.D.Yates1,2017-P.Parag} consider erasure channels, propose coding schemes, and analyze the resulting average or peak AoI in various setups. More recently, \cite{2019-E.Najm} proves that when the source alphabet and  channel input alphabet have the same size, a  Last-Come First-Serve  (LCFS) with no buffer policy is optimal.   Considering erasure channels with FCFS M/G/1 queues, \cite{2018-H.Sac}  finds an optimal  block length for channel coding to minimize the average age and average peak age of information. 

AoI has also been investigated in network management and scheduling. In particular, \cite{2018-Q.He} proposes scheduling policies to optimize the overall age in a wireless network. Maintaining equally up-to-date and synchronized information  from multiple sources is studied in \cite{2018-C.Joo}. In \cite{2018-IgorKadota}, scheduling algorithms are designed to minimize AoI in wireless broadcast channels. \cite{2018-I.Kadota} devises scheduling policies to minimize average AoI under throughput constraints in wireless multi-access networks. The minimum age of time-varying wireless channels with interference constraints is obtained in \cite{2018-R.Talak,2018-R.Talak1,2018-R.Talak2} with and without channel state information.

In this work, we aim to shed light on the interplay between AoI and (channel/network) coding in the context of broadcast packet erasure channels (BPECs) with feedback. BPECs and their variants have been investigated in previous work such as \cite{2009-L.Georgiadis,2010-C.Wang, 2012-M.Gatzianas, 2013-M.Gatzianas, 2018-M.Heindlmaier} and rate-optimal coding algorithms are designed using  (network) coding ideas. The key idea is due to \cite{2009-L.Georgiadis} where it is shown that the entire capacity region of two-user BPECs with feedback can be attained by XOR-ing overheard packets. More precisely, suppose a packet $p$, intended for user~$1$, is broadcasted and only  received at user~$2$. Scheduling algorithms re-transmit this packet because it is not received at its intended receiver. However, one may be able to exploit  coding opportunities by tracking such packets (and this is possible through the available feedback). For example, in a similar manner, a packet $q$, intended for user $2$, may  get transmitted and received only at receiver $1$. Now instead of re-transmitting $p$ and $q$ in two uses of the network, one can transmit the XOR packet $p\oplus q$ which is simultaneously useful for both users.

 As opposed to the aforementioned literature, in this work, we seek {\it efficiency in terms of age as opposed to rate}. The underlying challenge is as follows. On the one hand,  the highest rate of communication in  BPECs can be attained when coding is employed across packets of different users \cite{2009-L.Georgiadis}. A higher rate effectively corresponds to a smaller delay (both in the sense that the queues get emptied faster and in the sense that fewer uses of the network are needed in total to transmit a fixed number of information bits). On the other hand, to achieve high rates with coding, we have to incur delay by waiting for the arrival of other packets for the purpose of coding as well as prioritizing their transmission. So it is not clear apriori when  coding is beneficial. We will devise scheduling policies that {\it schedule different coding actions}, as opposed to traditional schemes that schedule different users, and show the benefit of coding in terms of average age over uncoded schemes such as those proposed in \cite{2018-IgorKadota}.

Motivated by the capacity achieving coding scheme of \cite{2009-L.Georgiadis}, in this work we restrict attention to a class of coding algorithms consisting of three actions: uncoded transmission for user $1$, uncoded transmission for user $2$, and coded (XOR-ed) transmission for both users. We consider a discrete time model as in \cite{2018-IgorKadota} and study the expected weighted sum of AoI (EWSAoI) at the users. The first contribution of the paper is a general lower bound on the achievable EWSAoI. As opposed to previous lower bounds  (e.g. \cite{2018-IgorKadota}) that hold only in the class of traditional scheduling algorithms, the new lower bound is valid for any coding scheme. The second contribution of the paper is the devise and analysis of EWSAoI or an upper bound on it for (i) stationary randomized policies and (ii) deterministic Max-Weight (MW) policies. In the class of randomized policies, for symmetric channels, we  find  conditions under which coded policies perform strictly better than their corresponding uncoded policies.  For MW policies,  we numerically compare the performance of coded and uncoded MW policies and show  gains of coding.

The remainder of this paper is organized as follows. The problem setup and notation are introduced in Section~\ref{sec: Basic model}. Section~\ref{sec: lower bound} presents a general lower bound on EWSAoI. In  Section~\ref{sec: stationary randomized policy}, we devise a randomized policy based on three coding actions and find a closed-form expression for the resulting EWSAoI. We further study the special case of symmetric BPECs and find conditions under which coded randomized policies strictly outperform uncoded randomized policies in terms of age. In Section~\ref{sec: Standard Max-Weight Policies}, we propose a Max-Weight policy and derive an upper bound on the resulting EWSAoI. Simulation results and the comparison of uncoded vs. coded schemes are presented in Section~\ref{sec: Numerical results and discussions}.

\section{System Model}\label{sec: Basic model}

We consider a model where time is slotted. At the beginning of every time slot, new packets are generated for the users and they replace any undelivered packets from previous time slots. Our model is similar to \cite{2018-I.Kadota, 2018-IgorKadota}, where users are scheduled  with the goal of minimizing age.

Transmission occurs on a noisy network which we model by a broadcast packet erasure channel with two users. In each time slot $k$, the input $X(k)$ to the channel is a packet. The packet is successfully delivered to user $i$ with probability $1-\epsilon_{i}$, $0\leq\epsilon_{i}<1$, and lost with probability $\epsilon_i$. Let $Z_i(k)$ be a random variable modeling erasure at user $i\in\{1,2\}$ in time slot $k\in\{1,2,\ldots\}$. We assume that the channel is memoryless and hence $\{Z_i(k)\}_{k=1}^\infty$ is an iid Bernoulli process with probability $1-\epsilon_i$. The output of the channel at user $i$ in  slot $k$ is:
\begin{align*}
Y_i(k)=\left\{\begin{array}{cc}X(k) &\text{if }Z_i(k)=1\\\Delta&\text{otherwise}\end{array}\right.
\end{align*}
where $\Delta$ is a symbol denoting erasure. Note that the pairs $\{(Z_1(k),Z_2(k))\}_k$ are independent across time (over $k=1,2,\ldots$) but potentially correlated across $(Z_1, Z_2)$. In addition, the feedback is available at the encoder after each transmission. Define $\epsilon_1,\epsilon_2,\epsilon_{12}$ as
\begin{align*}
\epsilon_{1}:=&\Pr(Z_1=0)\\
\epsilon_2:=&\Pr(Z_2=0)\\
\epsilon_{12}:=&\Pr(Z_1=0,Z_2=0),
\end{align*}
and hence we have
\begin{align*}
\Pr(Z_1=1,Z_2=0)=&\epsilon_{2}-\epsilon_{12}\\
\Pr(Z_1=0,Z_2=1)=&\epsilon_{1}-\epsilon_{12}.
\end{align*}
The statistics of $(Z_1,Z_2)$ that describes the channel is assumed fixed and given and can be characterized by $(\epsilon_1,\epsilon_2,\epsilon_{12})$.

We consider a simple class of coding algorithms that consists of three actions, including a network coding action.
The encoder is modeled by a network of virtual queues. Let  $Q_{1}^{(i)}$ denote the queue of incoming packets for user $i$ and $Q_2^{(i)}$ denote the queue of packets that are intended for user $i$ but are received only by the other user. The encoder can track such packets using the available feedback. For $i=\{1,2\}$, we use the notation $\backslash i$  as short for $\{1,2\}\backslash i$. The packets in $Q_2^{(i)}$ are not received at their intended receivers, but are received at the other receiver and act as side information for it -- this can be exploited in the code design at the encoder. In particular, the encoder can XOR packets in $Q_2^{(1)}$ with $Q_2^{(2)}$ and form more efficient coded packets for transmission.

In each time slot $k$, the encoder decides between the following three actions, denoted by $A(k)\in\{1,2,3\}$ and defined below:\vspace{-.15cm}
\begin{itemize}
  \item  $A(k)\!=\!1$: a packet is transmitted from $Q_{1}^{(1)}$;
  \item  $A(k)\!=\!2$: a packet is transmitted from $Q_{1}^{(2)}$;
  \item  $A(k)\!=\!3$: a coded packet is transmitted from $Q_{2}^{(1)}$, $Q_{2}^{(2)}$.
\end{itemize}

\begin{figure}[t!]
\centering
\includegraphics[width=2.5in]{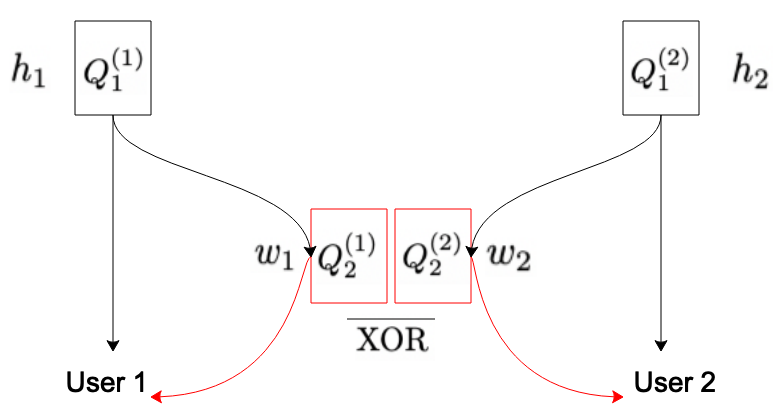}
\caption{the virtual network of queues at the encoder}
\label{fig_channels1}
\end{figure}

\begin{definition}[Age of Information \cite{Tutorial-Kaul}]\label{def: AoI}
Consider a source-destination pair. Let $\{t_k\}_k$ be the times at which packets are generated and $\{t_{k}'\}_k$ be  the times at which packets are received at the destination. At any time $\xi$, denote $N(\xi)=\max\{k|t_{k}'\leq\xi\}$, and $u(\xi)=t_{N(\xi)}$. The Age of Information (AoI) at the destination is $\Delta(t)=t-u(t)$.
\end{definition}

Using Definition \ref{def: AoI}, let $h_{i}$ be the positive real number that represents the age at user $i$. The age $h_{i}$ increases linearly in time when there is no delivery of packets to user $i$ and drops with every delivery to a value that represents how old the received packet is. 
\begin{lemma}\label{lem: buffer size 1}
To attain the optimal age in the above class of $3-$action coding algorithms,  we can assume, without loss of generality, that all queues are of buffer size $1$. 
\end{lemma}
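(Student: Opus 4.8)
The plan is to prove the claim by a \emph{domination} argument coupled on a common erasure realization. I will show that, within each virtual queue, the freshest stored packet weakly dominates every older one in the only two respects that matter — the age it induces when delivered, and its usability as an operand of the XOR in action $3$ — so that an age-optimal policy never needs to store more than one packet per queue. The two underlying monotonicity facts are: (i) by Definition~\ref{def: AoI}, a delivery to user $i$ resets its age to the time elapsed since the \emph{generation} of the delivered packet, so a fresher delivered packet yields a weakly smaller age; and (ii) every packet that ever sits in $Q_2^{(i)}$ is, by its very definition, a packet overheard at user $\backslash i$, hence the freshest packet of $Q_2^{(i)}$ is already available at user $\backslash i$ and is a valid side-information operand for action $3$.

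The incoming queues are immediate. Because a fresh packet for each user is generated at the start of every slot and replaces the previously undelivered one, $Q_1^{(i)}$ holds only the most recent packet at all times; by fact (i) transmitting it under action $i$ is age-optimal, so $Q_1^{(i)}$ may be taken to have buffer size $1$.

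For the side-information queues, I would fix an arbitrary policy $\pi$ that may hoard packets and construct a buffer-$1$ policy $\pi'$ that keeps only the freshest packet of each $Q_2^{(i)}$, discarding a stored packet as soon as a fresher one enters. I run $\pi'$ on the same sequence of action types and the same channel realization as $\pi$, always reading the freshest packet from each queue; whenever $\pi$ performs action $3$ on packets $p\in Q_2^{(1)}$, $q\in Q_2^{(2)}$ and $\pi'$'s corresponding queues are nonempty, $\pi'$ XORs its own freshest packets $\tilde p,\tilde q$ instead. By fact (ii) the transmitted packet $\tilde p\oplus\tilde q$ is decodable at both users exactly as $p\oplus q$ is, and by fact (i) each recovered packet is at least as fresh as the one $\pi$ delivers, so both ages reset to weakly smaller values.

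The step I expect to be the main obstacle is that, once $\pi'$ discards stale packets, the occupancy of its queues diverges from that of $\pi$, so $\pi'$ may face an \emph{empty} $Q_2^{(i)}$ in a slot where $\pi$ can still perform action $3$. I would handle this by an induction over slots whose hypothesis is that, for each user $i$, the freshest packet delivered so far by $\pi'$ is at least as fresh as that delivered by $\pi$, so that $\pi'$ has weakly smaller age at every user in every slot. The crux is that any packet $p$ residing in $\pi$'s $Q_2^{(i)}$ entered $\pi'$'s queue at the very same slot (identical action and channel), and since $\pi'$ always retains or has already delivered the freshest content of that queue, its delivered-freshest for user $i$ is at least as recent as $\mathrm{gen}(p)$; hence a delivery of the stale $p$ by $\pi$ cannot bring $\pi$ ahead, and the inductive hypothesis is preserved through the slot. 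Taking the weighted sum and then the expectation then gives that $\pi'$, which uses buffer size $1$ in every queue, attains EWSAoI no larger than that of $\pi$, which establishes the lemma.
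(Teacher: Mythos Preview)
Your argument is sound and close in spirit to the paper's, with one structural difference worth noting. You keep $\pi'$ on the \emph{same} action sequence as $\pi$ and carry an explicit inductive invariant (the freshest delivered packet under $\pi'$ is at least as fresh as under $\pi$) through every slot; the paper instead allows its buffer-$1$ policy $\pi_1$ to \emph{switch actions} away from $\pi_2$'s choice. Concretely, when $\pi_2$ picks action~$3$ but, say, $Q_2^{(1)}$ contains only stale packets, the paper has $\pi_1$ fall back to the uncoded action~$2$ for the other user rather than attempt an XOR. Your mechanism is leaner because you never have to decide when to switch; the paper's mechanism makes the empty-queue case disappear by construction rather than by invariant. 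Both routes rely on the same two monotonicity facts you isolated.

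One small imprecision in your crux step: you write that ``$\pi'$ always retains or has already delivered the freshest content of that queue.'' This is not literally true, because $\pi'$'s $Q_2^{(i)}$ can also be emptied by a successful \emph{uncoded} action~$i$ at some slot $m$ (see the first line of recursion \eqref{equ: BM-w}), in which case the side-information packet is discarded, not delivered. The fix is immediate: in that event $\pi'$ delivers the slot-$m$ packet from $Q_1^{(i)}$, which has generation time $m$ and is strictly fresher than anything that was sitting in $Q_2^{(i)}$ at that moment; hence the conclusion ``$\pi'$'s delivered-freshest for user $i$ is at least $\mathrm{gen}(p)$'' still follows. With that patched, your inductive step goes through in all sub-cases (including both $\pi'$-queues empty), and the lemma is proved.
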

\begin{proof}
The proof is presented in Appendix~\ref{App: Lemma buffer size 1}.
\end{proof}
 
To capture the evolution of $h_i$  in the class of $3-$action algorithms described, we proceed as follows.
First, define $w_{i}(k)$ as the (current) age of information at $Q_{2}^{(i)}$ in slot $k$. 
If the packet in $Q_{2}^{(i)}$ is successfully delivered at user $i$ by time $k$, then  it is removed from $Q_{2}^{(i)}$ and $w_{i}(k)$ is defined to be zero; if a packet in $Q^{(i)}_{2}$ is replaced by a new packet $p$ in slot $k$, then $w_{i}(t)$ is the age of the new packet  when $t>k$. More precisely, suppose packet $p$ is generated at time $t_p$. At time $t$, while $p$ is in the queue  $Q_{2}^{(i)}$,  the age $w(t)$ is $t-t_p$. Finally, once a packet is delivered successfully at user $i$ from $Q_1^{(i)}$, then the existing packet in $Q_2^{(i)}$ (which is necessarily older) becomes obsolete and hence we remove it from $Q_{2}^{(i)}$ and define  $w_i(k+1)$ to be $0$. Thus,
the recursion of $w_{i}(k)$ is
\begin{align}\label{equ: BM-w}
\footnotesize
w_{i}(k+1)=\left\{\begin{aligned}
&0\,\,\text{if}\, A(k)\in\{i,3\},\, Z_i(k)=1\\
&1\,\,\text{if}\, A(k)=i,\, (Z_i(k),Z_{\backslash{i}}(k))=(0,1)\\
&(w_{i}(k)+1)\cdot1_{\{w_{i}(k)>0\}}\,\,\text{otherwise}
\end{aligned}\right..
\end{align}
Based on $w_{i}(k)$, the age function $h_{i}(k)$ evolves as follows: 
\begin{align}\label{equ: BM-age}
\footnotesize
h_{i}(k+1)=\left\{\begin{aligned}
&1\quad\text{if}\,\, A(k)=i,\, Z_i(k)=1\\
&w_{i}(k)+1\quad\text{if}\,\, A(k)=3,\, Z_{i}(k)=1\\
&h_{i}(k)+1\quad\text{otherwise}
\end{aligned}\right..
\end{align}
\begin{remark}\label{w<=h-1}
Using the mathematical recursions in \eqref{equ: BM-w} and \eqref{equ: BM-age}, we conclude that $w_{i}(k)\leq h_{i}(k)-1$.
\end{remark}

\begin{figure}[t!]
\centering
\includegraphics[width=3.5in]{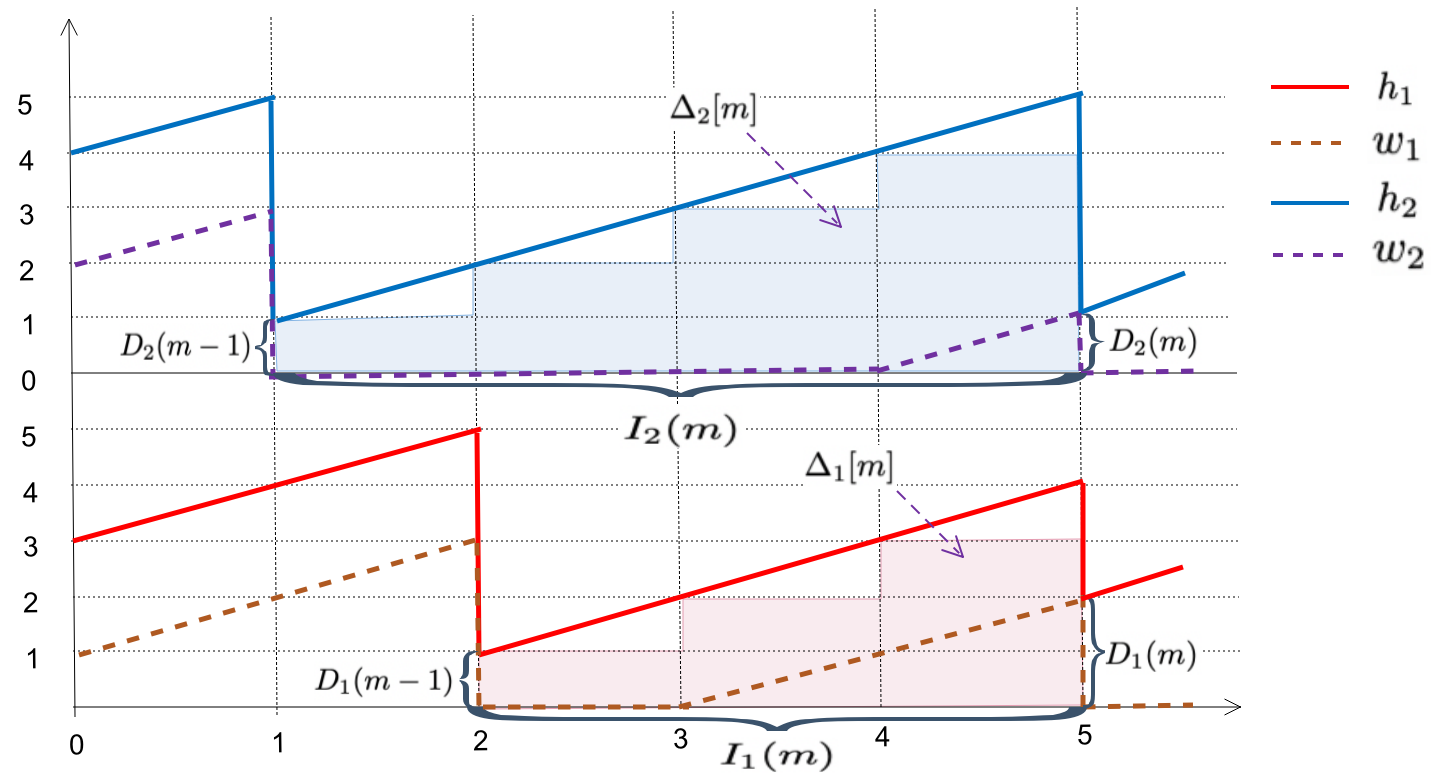}
\caption{a sample path of the channel state $(h_{1}, h_{2}, w_{1}, w_{2})$ which initial state $(h_{1},h_{2},w_{1},w_{2})=(3,4,1,2)$}
\label{fig_channels2}
\end{figure}

\subsection{A Sample  Path}
\label{sec:samplepath}
A sample path for the evolution of $w_1,h_1$ and $w_2,h_2$ is shown in Fig. \ref{fig_channels2}. The initial state is $(h_{1},h_{2},w_{1},w_{2})=(3,4,1,2)$, and the actions and the channels are as follows:

\begin{center}
\begin{tabular}{r|ccccc}
$k$&1&2&3&4&5\\
$A(k)$&2&1&1&2&3\\
$(Z_1(k),Z_2(k))$&(1,1)&(1,0)&(0,1)&(1,0)&(1,1)\\
\end{tabular}
\end{center}

 Now consider a general sample path associated  with a transmission policy and a finite time-horizon $T$. For this sample path, let $N_{i}(T)$ be the total number of packets delivered to user $i$ up to and including time slot $T$, and $I_{i}(m)$ be the number of time slots between the $(m-1)$th and $m$th deliveries to user $i$, i.e., the inter delivery times of user $i$. Denote the age of  user $i$ after delivery of the $m$th packet by $D_{i}(m)$ and let $L_{i}$ be the number of  remaining time slots after the last packet delivery to the same user. With this notation, the time-horizon can be written as 
$T=\sum_{m=1}^{N_{i}(T)}I_{i}(m)+L_{i}$ with $i\in\left\{1,2\right\}$.

Next,  consider the sum of the instantaneous ages in the interval corresponding to $I_{i}(m)$, $m\geq2$, denoted by $\Delta_{i}[m]$. As shown in Figure \ref{fig_channels2}, $\Delta_{i}[m]$ is equal to the  area  underneath the age curve in the corresponding interval minus $I_{i}(m)$ small triangle areas; i.e.,
\begin{align}
\footnotesize
&\Delta_i[m]=\sum_{k\text{ in between delivery of ${m-1}^{\text{}th}$ and $m^{\text{th}}$ packets}} h_i[k]\\
=&\frac{(D_i(m-1)+I_{i}(m))^{2}}{2}-\frac{(D_i(m-1))^{2}}{2} -\frac{I_i(m)}{2}\\
=&\frac{I_{i}^{2}(m)}{2}+D_i(m-1)I_{i}(m)-\frac{I_i(m)}{2}\label{equ: delta_i[m]}.
\end{align}

\subsection{The Expected Weighted Sum AoI}

Both $w_i(k)$ and $h_i(k)$ and their evolution depend on the policy that we choose  and, hence, we sometimes write them as  $w_{i}^{\pi}(k)$ and $h_{i}^\pi(k)$. Moreover, we denote the vector $(h_1(1),h_2(1),w_1(1),w_2(1))$ by $\vec{s}(1)$. 

We aim to find policies $\pi$ that minimize the following EWSAoI at the users:
\begin{align}\label{equ: BM-expected weighted sum aoi}
\mathbb{E}\left[\frac{1}{2T}\sum_{k=1}^{T}\sum_{i=1}^{2}\alpha_{i}h^{\pi}_{i}(k)\Big|\vec{s}(1) \right]
\end{align}
where $\alpha_{1}$ and $\alpha_{2}$ are weights associated to users $1$ and $2$, respectively. We assume $\alpha_{i}\geq0$ and $\alpha_{1}+\alpha_{2}=1$.
For notational simplicity, we omit $\vec{s}(1)$ hereafter, and hence the minimum age is given by the following optimization problem.
\begin{align}\label{equ: BM-optimization}
\min_{\pi\in\Pi}\mathbb{E}[J_{T}^{\pi}],\quad\text{where}\quad J_{T}^{\pi}=\frac{1}{2T}\sum_{k=1}^{T}\sum_{i=1}^{2}\alpha_{i}h_{i}^{\pi}(k).
\end{align}
Since we aim to minimize the EWSAoI in the long run, we define 
\begin{align*}
J^{\pi}=\lim_{T\rightarrow\infty}J_{T}^{\pi}.
\end{align*}

\section{A Lower Bound}\label{sec: lower bound}
We prove a  lower bound on EWSAoI as stated below. 
 \begin{theorem}\label{thm: lowerbound}
For any  communication policy $\pi$, we have: 
\begin{align}
\mathbb{E}[J^{\pi}]\geq\frac{1}{4}\left(\frac{(\sum_{i=1}^{2}\sqrt{\alpha_{i}(2-\epsilon_{12}-\epsilon_{\backslash{i}})})^{2}}{(1-\epsilon_{12})(2-\epsilon_{1}-\epsilon_{2})}+1\right).\end{align}
\end{theorem}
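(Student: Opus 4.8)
The plan is to lower-bound the time-average age of each user \emph{separately} in terms of its long-run delivery rate $\lambda_i:=\lim_{T\to\infty}N_i(T)/T$, and then to minimize the resulting expression over the set of rate pairs $(\lambda_1,\lambda_2)$ that any policy can possibly achieve. First I would start from \eqref{equ: delta_i[m]}. Since the age immediately after any delivery satisfies $D_i(m-1)\ge 1$ (by \eqref{equ: BM-age} a just-delivered packet is at least one slot old), each interval contributes $\Delta_i[m]\ge \tfrac12 I_i^2(m)+\tfrac12 I_i(m)$. Summing over the $N_i(T)$ inter-delivery intervals and discarding the vanishing terminal contribution $L_i$ gives $\sum_{k=1}^T h_i(k)\ge \sum_{m}\big(\tfrac12 I_i^2(m)+\tfrac12 I_i(m)\big)$.

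Next I would apply the power-mean (Cauchy--Schwarz) inequality $\sum_m I_i^2(m)\ge \big(\sum_m I_i(m)\big)^2/N_i(T)=(T-L_i)^2/N_i(T)$, divide by $T$, and pass to the limit; the convexity of $x\mapsto 1/x$ together with the renewal/ergodic structure of the delivery process justifies the interchange of limit and expectation and yields $\lim_{T}\tfrac1T\sum_k h_i(k)\ge \tfrac{1}{2\lambda_i}+\tfrac12$. Substituting this into the objective of \eqref{equ: BM-optimization} and using $\alpha_1+\alpha_2=1$ produces the intermediate bound
\[
\mathbb{E}[J^\pi]\ \ge\ \frac14\Big(\frac{\alpha_1}{\lambda_1}+\frac{\alpha_2}{\lambda_2}+1\Big).
\]

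The remaining and most delicate step is to characterize the admissible rate pairs. By Lemma~\ref{lem: buffer size 1} all queues have unit buffer, so over $T$ slots I would track the fractions of time $p_1,p_2,p_3$ ($p_1+p_2+p_3=1$) devoted to the three actions and count successful receptions. A packet can leave $Q_1^{(i)}$ (being either delivered to $i$ or deposited in $Q_2^{(i)}$ as side information, and every packet later delivered by coding first passed through $Q_1^{(i)}$) only in an action-$i$ slot in which at least one user receives, which has probability $1-\epsilon_{12}$; this gives the cut $\lambda_i\le (1-\epsilon_{12})\,p_i$. Moreover a delivery to user~$i$ can occur only in an action-$i$ or action-$3$ slot with $Z_i=1$, giving $\lambda_i\le (1-\epsilon_i)(p_i+p_3)$. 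Writing $r:=1-\epsilon_{12}$ and $r_i:=1-\epsilon_i$, rewriting the second cut as $p_3\ge \lambda_i/r_i-p_i$, lower-bounding $p_3$ by the convex combination of the two such expressions with weights $r_i/(r_1+r_2)$, and finally substituting $p_1+p_2+p_3=1$ together with the first cut, the time fractions are eliminated and one obtains the single necessary condition
\[
(2-\epsilon_{12}-\epsilon_2)\,\lambda_1+(2-\epsilon_{12}-\epsilon_1)\,\lambda_2\ \le\ (1-\epsilon_{12})(2-\epsilon_1-\epsilon_2).
\]
This is the main obstacle: it is a valid outer bound on the achievable rate region (it contains the channel's capacity region and is tight at the fully-coded symmetric operating point), and the particular reception-probability weighting is exactly what collapses the two reception-limited cuts into one clean inequality rather than the pair of capacity constraints.

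Finally I would carry out the optimization. Writing the constraint as $c_1\lambda_1+c_2\lambda_2\le 1$ with $c_i=(2-\epsilon_{12}-\epsilon_{\backslash i})/\big[(1-\epsilon_{12})(2-\epsilon_1-\epsilon_2)\big]$, the Cauchy--Schwarz inequality $\big(\sum_i \alpha_i/\lambda_i\big)\big(\sum_i c_i\lambda_i\big)\ge \big(\sum_i\sqrt{\alpha_i c_i}\big)^2$ shows that, because any policy's rates obey the constraint, $\sum_i\alpha_i/\lambda_i\ge \big(\sum_i\sqrt{\alpha_i c_i}\big)^2$, with equality when $\alpha_i/(c_i\lambda_i^2)$ is constant across $i$. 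Substituting the explicit $c_i$ gives $\sum_i\sqrt{\alpha_i c_i}=\big(\sum_i\sqrt{\alpha_i(2-\epsilon_{12}-\epsilon_{\backslash i})}\big)/\sqrt{(1-\epsilon_{12})(2-\epsilon_1-\epsilon_2)}$, and plugging this into the intermediate bound reproduces the claimed expression. Since every inequality used holds for an arbitrary policy $\pi$, the bound is universal.
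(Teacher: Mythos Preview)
Your overall architecture matches the paper's: lower-bound each user's time-average age by $\tfrac{1}{2\lambda_i}+\tfrac12$ using $D_i(\cdot)\ge 1$ and Cauchy--Schwarz on the inter-delivery times, obtain the intermediate bound $\mathbb{E}[J^\pi]\ge \tfrac14\big(\sum_i \alpha_i/\lambda_i+1\big)$, then minimize over admissible $(\lambda_1,\lambda_2)$ via a second Cauchy--Schwarz. The paper does exactly this (citing \cite{2018-IgorKadota} for the first part).

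The substantive difference---and the gap---is in how you establish the rate constraint. The paper invokes the capacity region of the two-user BPEC with feedback \cite{2009-L.Georgiadis}, whose converse holds for \emph{every} coding policy, and then simply adds the two capacity inequalities to obtain $(2-\epsilon_{12}-\epsilon_2)\lambda_1+(2-\epsilon_{12}-\epsilon_1)\lambda_2\le (1-\epsilon_{12})(2-\epsilon_1-\epsilon_2)$. You instead invoke Lemma~\ref{lem: buffer size 1} and introduce time fractions $p_1,p_2,p_3$, which presupposes the $3$-action structure; Lemma~\ref{lem: buffer size 1} itself is stated only for that class. Consequently your argument, as written, proves the bound only for $3$-action policies, not for ``any communication policy $\pi$'' as the theorem asserts (the Remark following the theorem makes this generality explicit).

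Your parenthetical that the derived inequality ``contains the channel's capacity region'' is correct---indeed your constraint is exactly the sum of the two capacity inequalities---but that is precisely the step that delivers generality, and it should be the argument rather than a side remark. Replacing the cut-based derivation by a direct appeal to the capacity-region converse (or, equivalently, verifying that summing the two capacity constraints yields your single inequality) closes the gap and makes your proof coincide with the paper's.
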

\begin{remark}
The lower bound of Theorem \ref{thm: lowerbound} holds in general and is not restricted to the class of $3-$action coded algorithms that we introduced in Section \ref{sec: Basic model}.
\end{remark}
\begin{proof} 
Consider a sample path associated  with a transmission policy and a finite time-horizon $T$ (see Section \ref{sec:samplepath}). The EWSAoI as defined in \eqref{equ: BM-expected weighted sum aoi} can be re-written in terms of $\Delta_i(m)$'s:
\begin{align}\label{equ: LB-age1}
J_{T}^{\pi}=&\frac{1}{2T}\sum_{k=1}^{T}\sum_{i=1}^{2}\alpha_{i}h_{i}(k)\nonumber\\
=&\frac{1}{2}\!\sum_{i=1}^{2}\!\frac{\alpha_{i}}{T}\!\!\left(\!\sum_{m=1}^{N_{i}(T)}\!\!\!\!\Delta_i(m)\!+\!\frac{1}{2}L_i^2\!+\!D_{i}(\hspace{-.05cm}N_{i}(\hspace{-.05cm}T\hspace{-.05cm})\hspace{-.05cm})L_i\!-\!\frac{1}{2}L_i\!\!\right)\!\!.
\end{align}
Since $D_i(m)\geq 1$ for all $1\leq m\leq N_{i}(T)$, we can lower bound \eqref{equ: delta_i[m]} and hence \eqref{equ: LB-age1} by substituting $D_i(m)=1$.
Using similar steps as \cite[Eqns. (9) - (14)]{2018-IgorKadota}, we find
\begin{equation}\label{equ: age-4}
\begin{aligned}
J^{\pi}_{T}
\geq\frac{1}{4}\sum_{i=1}^{2}\alpha_{i}\frac{T}{N_{i}(T)}+\frac{1}{4}.
\end{aligned}
\end{equation}

We remark that so far, the lower bound on $J_{T}^{\pi}$ is the same as \cite[Eqn. (7)]{2018-IgorKadota}. We now depart from \cite{2018-IgorKadota} by allowing for the general class of coding and scheduling schemes.
Recall that $N_i(T)$ is the total number of packets received by user $i$, $i=1,2$. In the limit of $T\to\infty$, $\frac{N_i(T)}{T}$ is the throughput of user~$i$. We further know  the capacity of two-user broadcast packet erasure channels from \cite{2009-L.Georgiadis}. In particular, any non-negative rate pair $(R_1,R_2)$  is achievable if and only if it satisfies the following conditions.
\begin{align}
&\frac{R_1}{1-\epsilon_1}+\frac{R_2}{1-\epsilon_{12}}\leq 1\label{equ: LB1-throughput1}\\
&\frac{R_1}{1-\epsilon_{12}}+\frac{R_2}{1-\epsilon_{2}}\leq 1\label{equ: LB1-throughput2}.
\end{align}
Hence, from \eqref{equ: LB1-throughput1} and \eqref{equ: LB1-throughput2}, we have
\begin{align}
&\lim_{T\rightarrow\infty}\frac{\frac{N_{1}(T)}{T}}{1-\epsilon_1}+\frac{\frac{N_{2}(T)}{T}}{1-\epsilon_{12}}\leq 1\label{equ: LB-throughput1}\\
&\lim_{T\rightarrow\infty}\frac{\frac{N_{1}(T)}{T}}{1-\epsilon_{12}}+\frac{\frac{N_{2}(T)}{T}}{1-\epsilon_{2}}\leq 1\label{equ: LB-throughput2}.
\end{align}
For notational simplicity, denote $x=\lim_{T\rightarrow\infty}\frac{N_{1}(T)}{T}$ and $y=\lim_{T\rightarrow\infty}\frac{N_{2}(T)}{T}$. Substituting $x$ and $y$ into (\ref{equ: LB-throughput1}) and (\ref{equ: LB-throughput2}), we obtain
\begin{align*}
&\frac{x}{1-\epsilon_{1}}+\frac{y}{1-\epsilon_{12}}\leq1\\
&\frac{x}{1-\epsilon_{12}}+\frac{y}{1-\epsilon_{2}}\leq1.
\end{align*}
Re-arranging terms, we obtain
\begin{align*}
&(1-\epsilon_{12})x+(1-\epsilon_{1})y\leq(1-\epsilon_{12})(1-\epsilon_{1})\\
&(1-\epsilon_{2})x+(1-\epsilon_{12})y\leq(1-\epsilon_{12})(1-\epsilon_{2})\\
\end{align*}
and summing the above two inequalities, we get
\begin{align}
&k_1x+k_2y\leq k_3\label{eq:ineq3}
\end{align}
where 
\begin{align*}
k_1=&2-\epsilon_{2}-\epsilon_{12}\\
k_2=&2-\epsilon_{1}-\epsilon_{12}\\
k_3=&(1-\epsilon_{12})(2-\epsilon_{1}-\epsilon_{2}).
\end{align*}

We can now use \eqref{eq:ineq3} along with the Cauchy-Schwarz inequality and write
\begin{align*}
\Big(\frac{\alpha_{1}}{x}+\frac{\alpha_{2}}{y}\Big)k_3&\geq \Big(\frac{\alpha_{1}}{x}+\frac{\alpha_{2}}{y}\Big)(k_{1}x+k_{2}y)\\
&\geq\left(\sqrt{\alpha_{1}k_1}+\sqrt{\alpha_{2}k_2}\right)^{2}.
\end{align*}
Re-arranging terms and replacing for $k_1,k_2,k_3$, we find
\begin{align*}
\Big(\frac{\alpha_{1}}{x}+\frac{\alpha_{2}}{y}\Big)\geq \frac{(\sum_{i=1}^{2}\sqrt{\alpha_{i}(2-\epsilon_{\backslash{i}}-\epsilon_{12})})^{2}}{(1-\epsilon_{12})(2-\epsilon_{1}-\epsilon_{2})}
\end{align*}
and hence, from \eqref{equ: age-4}, we obtain
\begin{align*}
J^{\pi}\geq&\lim_{T\rightarrow\infty}\frac{1}{4}\sum_{i=1}^{2}\alpha_{i}\frac{T}{N_{i}(T)}+\frac{1}{4}\\
\geq&\frac{1}{4}\Big(\frac{(\sum_{i=1}^{2}\sqrt{\alpha_{i}(2-\epsilon_{12}-\epsilon_{\backslash{i}})})^{2}}{(1-\epsilon_{12})(2-\epsilon_{1}-\epsilon_{2})}+1\Big).
\end{align*}
\end{proof}

\section{Coded Randomized Policies}\label{sec: stationary randomized policy}

Consider a stationary randomized policy where each action is chosen with a fixed probability in each time slot, independent of the system's status. Denote by $\mu_i$ the probability of action $A(k)=i$, $i\in\{1,2,3\}$, $k\in\{1,\ldots\}$, where $\mu_{1}+\mu_{2}+\mu_{3}=1$. Denote the EWSAoI for randomized policies (in the long run) as $\mathbb{E}[J^{R}]$.

\subsection{Age Analysis}
We will first find the exact EWSAoI of the coded randomized policy. We start with $\Delta_i(m)$ derived in \eqref{equ: delta_i[m]}. Consider $i=1$.
The expectation of $\Delta_1(m)$ is 
\begin{align}
&\mathbb{E}[\Delta_1(m)|\vec{s}(1)]\!\\
=&\!\mathbb{E}\left[\frac{I_{1}^{2}(m)}{2}+D_1(m-1)I_{1}(m)-\frac{I_1(m)}{2}\Big|\vec{s}(1)\right]\!\\
=&\frac{\mathbb{E}\left[{I_{1}^{2}}\right]}{2}+\mathbb{E}[D_1]\mathbb{E}[I_{1}]-\frac{\mathbb{E}[I_1]}{2}\label{equ: Randomized policy-(a)}.
\end{align}
Equality \eqref{equ: Randomized policy-(a)} holds because of the following observations:  (i) the processes $\{I_1(m)\}_m$ and $\{D_1(m)\}_m$ are each iid and not dependent on $\vec{s}(1)$ (so we use $I_1$ and $D_1$ to denote the underlying random variables, respectively), and (ii) $I_{1}(m)$ and $D_1(m-1)$ are independent (while $I_1(m)$ and $D_1(m)$ may be dependent). Thus, the EWSAoI of user $1$ is
\begin{align}
&\lim_{T\to\infty}\frac{1}{T}\sum_{k=1}^T \mathbb{E}[h_1(k)|\vec{s}(1)]\\
=&\lim_{T\to\infty}\frac{1}{T}\sum_{m=1}^{N_1(T)} \mathbb{E}[\Delta_1(m)|\vec{s}(1)]\\
=&\lim_{T\to\infty}\frac{{N_1(T)}}{T}\left(\frac{\mathbb{E}\left[{I_{1}^{2}}\right]}{2}+\mathbb{E}[D_1]\mathbb{E}[I_{1}]-\frac{\mathbb{E}[I_1]}{2}\right)\\
=&\frac{\mathbb{E}\left[I_{1}^{2}\right]}{2\mathbb{E}[I_1]}+\mathbb{E}[D_1]-\frac{1}{2}\label{eq:ageformularandom}
\end{align} 
where \eqref{eq:ageformularandom} holds because the arrival process is a renewal process \cite[Section 3.3, Theorem 3.3.1]{2008-SheldonM.Ross} and hence $\lim_{T\rightarrow\infty}\frac{T}{N_1(T)}=\mathbb{E}[I_1]$.

Then, we consider the statistics of $D_{1}$, and
 find the probability of $\Pr(D_1=d)$ for $d=1,2,\ldots$. For each slot~$k$, 
\begin{equation}\label{equ: Randomized p(h_1=d)}
\begin{aligned}
P(h_{1}(k)=1)&=\mu_{1}(1-\epsilon_{1})\\
P(h_{1}(k)=d)&=\mu_{1}\mu_{3}(1-\epsilon_{1})(\epsilon_{1}-\epsilon_{12})\\
&\times
(\mu_{1}\epsilon_{12}+\mu_{2}+\mu_{3}\epsilon_{1})^{d-2}\quad d\geq 2.
\end{aligned} 
\end{equation}
To find the probability distribution of $D_1$, condition the above probabilities on the event that a packet is delivered to user $1$ at time slot $k$. The probability of this event can be found by summing \eqref{equ: Randomized p(h_1=d)} over all $d\geq 1$:
$$P_{\text{delivery}}^{1}=\frac{\mu_1(1-\epsilon_1)(\mu_1+\mu_3)(1-\epsilon_{12})}{1-\mu_{1}\epsilon_{12}-\mu_{2}-\mu_{3}\epsilon_{1}}.$$
We thus find
\begin{equation*}
\footnotesize
\begin{aligned}
P(D_{1}=d)=\left\{\begin{array}{ll}
\mu_{1}(1-\epsilon_{i})/P_{\text{delivery}}^{1}&d=1\\
\frac{\mu_{1}\mu_{3}(\epsilon_{1}-\epsilon_{12})(1-\epsilon_{1})(\mu_{1}\epsilon_{12}+\mu_{2}+\mu_{3}\epsilon_{1})^{d-2}}{P_{\text{delivery}}^{1}}& d\geq2
\end{array}\right.
\end{aligned}
\end{equation*}
and the expectation of $D_1$ is equal to
\begin{equation}\label{equ: R-ED}
\footnotesize
\begin{aligned}
E(D_1)=1+\frac{\mu_3(\epsilon_{1}-\epsilon_{12})}{(\mu_1+\mu_3)(1-\epsilon_{12})(1-\mu_{1}\epsilon_{12}-\mu_{2}-\mu_{3}\epsilon_{1})}.
\end{aligned}
\end{equation}

The distribution of $I_1$ can be found by treating $I_1$ and $D_1$ jointly. First of all, we have $$P(I_{1}=1)=P(I_{1}=1,D_1=1)=\mu_{1}(1-\epsilon_{1}).$$ 
Then we look at the event of $I_1=\ell$ and $D_1=d$ for $\ell\geq 2, d\leq\ell$ (otherwise, if $d>\ell$, then $\Pr(I_1=\ell, D_1=d)=0$ because  the packet in the queue $Q_2^{(1)}$ becomes obsolete once a new packet is delivered to user $1$).  
So we assume $d\leq\ell$ and consider the following cases: (\romannumeral1) If $d=1$, then  a packet was delivered by action $i$ at slot $\ell$. 
(\romannumeral2) For $d\geq2$, the delivered packet was moved to $Q_2^{(1)}$ at slot $\ell-d+1$, stayed there, and got received at user $1$ at slot $\ell$. Now  consider slots $1$ to  $\ell-d+1$. Denote by $t$ the first slot in which a packet is received in $Q_{2}^{(1)}$, $1\leq t\leq \ell-d+1$. Then we have the two sub-cases: 
(\romannumeral2-1) If $t$ exists,
then $Q_2^{(1)}$ is empty before $t$, and the delivered packet (another packet different from the delivered packet) moves to $Q_2^{(1)}$ at $t$ when $t=\ell-d+1$ (when $t<\ell-d+1$) and from that point $Q_2^{(1)}$ is non-empty. (\romannumeral2-2) If there is no such slot $t$, which may happen for $d=1$, then $Q_2^{(1)}$ is empty for the entire duration of $\ell$ slots. Considering the above cases, for $\ell=1,2,\ldots$,  we can show the following lemma. 
\begin{lemma}\label{lem: distribution of I_i}
The probability distribution of the inter delivery random variable $I_{1}$ is given by
\begin{equation}\label{equ: dirstibution of I_i}
P(I_{1}=\ell)=\delta_{1} x_{1}^{\ell-1}+\beta_{1} y_{1}^{\ell-1}
\end{equation} 
where $$x_{1}=\mu_1\epsilon_{12}+\mu_{2}+\mu_3$$ $$y_{1}=\mu_1\epsilon_{1}+\mu_{2}+\mu_3\epsilon_1$$ $$\delta_{1}=\mu_1(1-\epsilon_{12})+\frac{\mu_{1}^{2}(\epsilon_{1}-\epsilon_{12})(1-\epsilon_{12})}{-\mu_{1}(\epsilon_{1}-\epsilon_{12})+\mu_{3}(1-\epsilon_{1})}$$ 
$$\beta_{1}=-\frac{\mu_{1}(\epsilon_{1}-\epsilon_{12})(1-\epsilon_1)}{-\mu_{1}(\epsilon_{1}-\epsilon_{12})+\mu_{3}(1-\epsilon_{1})}\Big({\mu_{1}}+{\mu_3}\Big).$$
\end{lemma}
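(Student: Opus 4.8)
The plan is to compute $P(I_1=\ell)$ by a first-step (renewal) analysis of a two-state process that tracks only whether the side-information queue $Q_2^{(1)}$ is empty or non-empty during an inter-delivery interval. The enabling observation is that \emph{every} delivery to user~$1$—fresh via $A=1$ or coded via $A=3$—empties $Q_2^{(1)}$ (it sets $w_1\!\to\!0$ in \eqref{equ: BM-w}), so each inter-delivery interval \emph{starts} with $Q_2^{(1)}$ empty. From the recursions \eqref{equ: BM-w}--\eqref{equ: BM-age} I would first read off the per-slot probabilities in the two regimes. When $Q_2^{(1)}$ is empty no coded delivery is possible, so only $A=1$ with $Z_1=1$ (probability $\mu_1(1-\epsilon_1)$) ends the interval; a packet enters the queue on the event $A=1,\,(Z_1,Z_2)=(0,1)$ with probability $\mu_1(\epsilon_1-\epsilon_{12})$; and otherwise the queue stays empty with probability exactly $x_1=\mu_1\epsilon_{12}+\mu_2+\mu_3$ (note the bare $\mu_3$, since $A=3$ on an empty queue merely wastes the slot). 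When $Q_2^{(1)}$ is non-empty, either $A=1$ or $A=3$ with $Z_1=1$ delivers, for total probability $(\mu_1+\mu_3)(1-\epsilon_1)$, and otherwise the queue stays non-empty—whether the stored packet ages or is overwritten—with total probability $y_1=\mu_1\epsilon_1+\mu_2+\mu_3\epsilon_1$. A quick check that these sum to one in each regime validates the bookkeeping.

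Next I would decompose $\{I_1=\ell\}$ by the queue state at the terminating slot, which mirrors the case split (i)/(ii) stated before the lemma. Because a no-delivery slot in the non-empty regime never returns the queue to empty, the trajectory stays empty until the unique first slot $t$ at which a packet enters, and stays non-empty thereafter until the delivery. This yields two contributions: in the first the queue is empty for all $\ell-1$ non-terminating slots and the interval ends with a fresh delivery, contributing $x_1^{\ell-1}\mu_1(1-\epsilon_1)$; in the second, indexing by the first-entry slot $j\in\{1,\dots,\ell-1\}$, the interval spends $j-1$ slots empty, transitions at slot $j$, spends $\ell-1-j$ further non-empty slots, and ends with a delivery from the non-empty state, contributing
\begin{align*}
\sum_{j=1}^{\ell-1} x_1^{\,j-1}\,\mu_1(\epsilon_1-\epsilon_{12})\,y_1^{\,\ell-1-j}\,(\mu_1+\mu_3)(1-\epsilon_1).
\end{align*}

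To finish, I would evaluate the inner geometric sum as $\sum_{j=1}^{\ell-1}x_1^{\,j-1}y_1^{\,\ell-1-j}=(x_1^{\ell-1}-y_1^{\ell-1})/(x_1-y_1)$ (valid for $x_1\neq y_1$; the degenerate case $x_1=y_1$ gives the confluent form $(\ell-1)x_1^{\ell-2}$ and is handled by continuity), and then collect coefficients of $x_1^{\ell-1}$ and $y_1^{\ell-1}$. The coefficient of $y_1^{\ell-1}$ is immediately $-\mu_1(\epsilon_1-\epsilon_{12})(1-\epsilon_1)(\mu_1+\mu_3)/(x_1-y_1)$, which, using $x_1-y_1=-\mu_1(\epsilon_1-\epsilon_{12})+\mu_3(1-\epsilon_1)$, is exactly the claimed $\beta_1$. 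The coefficient of $x_1^{\ell-1}$ is $\mu_1(1-\epsilon_1)+\mu_1(\epsilon_1-\epsilon_{12})(\mu_1+\mu_3)(1-\epsilon_1)/(x_1-y_1)$, which is $\delta_1$.

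I expect the main difficulty to be entirely in the bookkeeping rather than the idea. First, one must read the two regime-dependent probabilities off \eqref{equ: BM-w}--\eqref{equ: BM-age} correctly—in particular recognizing that $A=3$ on an empty $Q_2^{(1)}$ delivers nothing and must not terminate the interval, which is precisely what makes the empty-regime survival probability $x_1$ (with a full $\mu_3$) differ from the non-empty survival probability $y_1$. Second, the expression that drops out of the derivation does not \emph{look} like the stated $\delta_1=\mu_1(1-\epsilon_{12})+\mu_1^2(\epsilon_1-\epsilon_{12})(1-\epsilon_{12})/(x_1-y_1)$; reconciling the two is the one nontrivial algebraic step, and it succeeds because both forms collapse to $\mu_1\mu_3(1-\epsilon_1)(1-\epsilon_{12})/(x_1-y_1)$ after invoking $\mu_1+\mu_2+\mu_3=1$ together with the identities $(x_1-y_1)+(\epsilon_1-\epsilon_{12})(\mu_1+\mu_3)=\mu_3(1-\epsilon_{12})$ and $(x_1-y_1)+\mu_1(\epsilon_1-\epsilon_{12})=\mu_3(1-\epsilon_1)$.
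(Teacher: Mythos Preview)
Your argument is correct, and it is genuinely more economical than the paper's own proof. The paper derives $P(I_1=\ell)$ by first computing the \emph{joint} law $P(I_1=\ell,\,D_1=d)$, conditioning further on the slot at which the \emph{specific} delivered packet entered $Q_2^{(1)}$. That finer decomposition forces a third survival probability $z_1=\mu_1\epsilon_{12}+\mu_2+\mu_3\epsilon_1$ (the probability that the queued packet is neither delivered nor overwritten) and yields four terms $eq_1,\dots,eq_4$ in which the $z_1^{\ell-1}$ contributions must be shown to cancel before one arrives at $\delta_1 x_1^{\ell-1}+\beta_1 y_1^{\ell-1}$. Your two-state reduction sidesteps all of this: since you only need the marginal of $I_1$, not $D_1$, it suffices to know whether $Q_2^{(1)}$ is empty, and the key structural fact you isolate---that every delivery resets $w_1$ to $0$ and that the non-empty state is absorbing until delivery---collapses the computation to a single geometric sum in $x_1,y_1$. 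The trade-off is that the paper's route additionally gives access to the joint law of $(I_1,D_1)$, though in the end the paper uses only the marginals anyway, obtaining $\mathbb{E}[D_1]$ by a separate direct calculation. Your verification that the $x_1^{\ell-1}$ coefficient matches the stated $\delta_1$ via the common value $\mu_1\mu_3(1-\epsilon_1)(1-\epsilon_{12})/(x_1-y_1)$ is exactly right.
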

The proof of Lemma~\ref{lem: distribution of I_i} is in Appendix~\ref{App: Lemma 1}.
Using Lemma~\ref{lem: distribution of I_i}, we find $\mathbb{E}[I_{1}]$ and $\mathbb{E}[I_{1}^2]$:
\begin{align}
\mathbb{E}[I_{1}]=&\frac{\delta_{1}}{(1-x_{1})^2}+\frac{\beta_{1}}{(1-y_{1})^2}\label{eq:exp}\\
\mathbb{E}[I_{1}^2]=&\frac{\delta_{1}(1+x_{1})}{(1-x_{1})^3}+\frac{\beta_{1}(1+y_{1})}{(1-y_{1})^3}.\label{eq:second}
\end{align}
Finally, substituting \eqref{equ: R-ED}, \eqref{eq:exp},  \eqref{eq:second} into \eqref{eq:ageformularandom}, and replacing for the values of $x_i,y_i,z_i$ and $\delta_i, \beta_i$, we find the EWSAoI as given by the following theorem.

\begin{theorem}\label{thm: R-age} 
The $EWSAoI$ of Randomized policy is characterized by 
\begin{equation}\label{equ: R-randomizedpolicyage}
\footnotesize
\begin{aligned}
&\mathbb{E}[J^{R}]\!=\!\frac{1}{2}\!\sum_{i=1}^{2}\!\alpha_{i}\!\!\left(\!\!\frac{\frac{\mu_3(1-\epsilon_i)}{\left(\mu_i(1-\epsilon_{12})\right)^2}+\frac{-\mu_i(\epsilon_{i}-\epsilon_{12})}{\left((\mu_i+\mu_3)(1-\epsilon_i)\right)^2}}{\frac{\mu_3(1-\epsilon_i)}{\mu_i(1-\epsilon_{12})}+\frac{-\mu_i(\epsilon_{i}-\epsilon_{12})}{(\mu_i+\mu_3)(1-\epsilon_i)}}\right.\\
&\left.\qquad\qquad\qquad\  \ \ +\frac{\mu_3(\epsilon_{i}-\epsilon_{12})}{(\mu_i\!+\!\mu_3)\!(1\!-\!\epsilon_{12})\!(\mu_i(1\!-\!\epsilon_{12})\!\hspace{-.05cm}+\hspace{-.05cm}\!\mu_3(1\!-\!\epsilon_i)\!)}\!\!\right)\!\hspace{-.05cm}.
\end{aligned}
\end{equation}
\end{theorem}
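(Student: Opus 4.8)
The plan is to treat Theorem~\ref{thm: R-age} as the final assembly step: combine the single-user renewal formula \eqref{eq:ageformularandom} for both users and substitute the closed-form moments computed above. Since the weighted objective in \eqref{equ: BM-optimization} splits as $\frac{1}{2}\sum_{i=1}^2 \alpha_i \cdot \lim_T \frac{1}{T}\sum_k \mathbb{E}[h_i(k)]$, and since the derivation leading to \eqref{eq:ageformularandom} for user~$1$ carries over verbatim to user~$2$ after the symmetric relabeling $\mu_1 \leftrightarrow \mu_2$, $\epsilon_1 \leftrightarrow \epsilon_2$ (with $\mu_3$ and $\epsilon_{12}$ fixed, since action~$3$ and the joint-erasure statistics are common to both users), it suffices to establish the $i=1$ summand and then invoke this symmetry. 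Thus the whole task reduces to simplifying $\frac{\mathbb{E}[I_1^2]}{2\mathbb{E}[I_1]} + \mathbb{E}[D_1] - \frac{1}{2}$, using \eqref{equ: R-ED}, \eqref{eq:exp}, \eqref{eq:second} and the values of $x_1,y_1,\delta_1,\beta_1$ from Lemma~\ref{lem: distribution of I_i}, into the bracketed expression in \eqref{equ: R-randomizedpolicyage}.

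First I would record the two simplifications that drive everything: using $\mu_1+\mu_2+\mu_3=1$ one gets the clean forms $1-x_1 = \mu_1(1-\epsilon_{12})$ and $1-y_1 = (\mu_1+\mu_3)(1-\epsilon_1)$, and the common denominator $c_1 := \mu_3(1-\epsilon_1) - \mu_1(\epsilon_1-\epsilon_{12})$ appearing in $\delta_1,\beta_1$. A short manipulation collapses the coefficients to $\delta_1 = \mu_1\mu_3(1-\epsilon_{12})(1-\epsilon_1)/c_1$ and $\beta_1 = -\mu_1(\mu_1+\mu_3)(\epsilon_1-\epsilon_{12})(1-\epsilon_1)/c_1$ (here the leading $\mu_1(1-\epsilon_{12})$ in $\delta_1$ merges with the fractional part so that the $\pm\mu_1(\epsilon_1-\epsilon_{12})$ terms cancel). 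Substituting these together with the forms of $1-x_1, 1-y_1$ makes the powers cancel cleanly: $\frac{\delta_1}{(1-x_1)^2}$ and $\frac{\delta_1}{(1-x_1)^3}$ reduce to $\frac{\mu_3(1-\epsilon_1)}{c_1}$ times $\frac{1}{\mu_1(1-\epsilon_{12})}$ and $\frac{1}{(\mu_1(1-\epsilon_{12}))^2}$ respectively, and similarly for the $\beta_1$ terms. In particular the factor $1/c_1$ appears in both $\mathbb{E}[I_1]$ and $\mathbb{E}[I_1^2]$ and cancels in their ratio, producing exactly the denominator and (after the next step) the numerator of the first bracketed term of \eqref{equ: R-randomizedpolicyage}.

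The one nontrivial algebraic identity is for $\mathbb{E}[I_1^2]$: I would rewrite $\frac{1+x}{(1-x)^3} = \frac{2}{(1-x)^3} - \frac{1}{(1-x)^2}$, so that \eqref{eq:second} becomes $\mathbb{E}[I_1^2] = 2\big(\frac{\delta_1}{(1-x_1)^3} + \frac{\beta_1}{(1-y_1)^3}\big) - \mathbb{E}[I_1]$. Dividing by $2\mathbb{E}[I_1]$ then gives $\frac{\mathbb{E}[I_1^2]}{2\mathbb{E}[I_1]} = \frac{\delta_1/(1-x_1)^3 + \beta_1/(1-y_1)^3}{\mathbb{E}[I_1]} - \frac{1}{2}$. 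This is the step that makes the bookkeeping close: the $-\frac{1}{2}$ produced here, together with the explicit $-\frac{1}{2}$ in \eqref{eq:ageformularandom}, exactly cancels the $+1$ in $\mathbb{E}[D_1]$ from \eqref{equ: R-ED} (after rewriting its denominator $1-\mu_1\epsilon_{12}-\mu_2-\mu_3\epsilon_1 = \mu_1(1-\epsilon_{12}) + \mu_3(1-\epsilon_1)$), leaving precisely the second bracketed term of \eqref{equ: R-randomizedpolicyage}. Applying symmetry for $i=2$ and forming the $\alpha$-weighted sum completes the proof.

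I expect the main obstacle to be purely the bookkeeping: verifying that the leading term of $\delta_1$ merges with its fractional part to yield the single-product form, and then tracking the $1/c_1$ factor so it cancels between $\mathbb{E}[I_1^2]$ and $\mathbb{E}[I_1]$. These cancellations are what turn the opaque Lemma~\ref{lem: distribution of I_i} coefficients into the compact ratio of \eqref{equ: R-randomizedpolicyage}; none of the steps is conceptually deep, but each is easy to get wrong by a sign or a missing factor, so I would carry the $c_1$-normalized forms of $\delta_1,\beta_1$ throughout rather than expanding prematurely.
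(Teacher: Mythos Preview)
Your proposal is correct and follows exactly the approach the paper takes: the paper's proof of Theorem~\ref{thm: R-age} is simply the instruction ``substituting \eqref{equ: R-ED}, \eqref{eq:exp}, \eqref{eq:second} into \eqref{eq:ageformularandom}, and replacing for the values of $x_i,y_i,z_i$ and $\delta_i,\beta_i$,'' with the algebra left implicit. You have supplied the missing bookkeeping---the clean forms $1-x_1=\mu_1(1-\epsilon_{12})$, $1-y_1=(\mu_1+\mu_3)(1-\epsilon_1)$, the collapse of $\delta_1$ to $\mu_1\mu_3(1-\epsilon_{12})(1-\epsilon_1)/c_1$, the rewriting $\frac{1+x}{(1-x)^3}=\frac{2}{(1-x)^3}-\frac{1}{(1-x)^2}$, and the cancellation of the two $-\tfrac{1}{2}$'s against the $+1$ in $\mathbb{E}[D_1]$---all of which check out and recover the stated formula.
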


\begin{remark}
To find an \emph{optimal coded randomized policy} with respect to age, we have to choose the probability vector $(\mu_{1}^{*}, \mu_{2}^{*}, \mu_{3}^{*})$ such that
$\mathbb{E}[J^{R}]$ is minimized. 
\end{remark} 
\begin{remark}
Setting $\mu_3=0$, we recover the EWSAoI of \cite{2018-IgorKadota} which corresponds to uncoded randomized policies. 
\end{remark}

\subsection{Symmetric BPECs}

Consider the class of symmetric BPECs. Let the erasure probabilities of channels to users $1$ and $2$ be equal to $\epsilon$ and  $\epsilon_{12}$ be the probability of simultaneous erasure at both users. So $\epsilon>\epsilon_{12}$. Note that $\epsilon_{12}$ is either  a function of $\epsilon$ or a constant, thus we rewrite $\epsilon_{12}$ as $\epsilon_{12}(\epsilon)$. Denote the probabilities of choosing action $1$ and $2$ by $\mu$, hence the probability of choosing action $3$ is $1-2\mu$ with $0\leq\mu\leq1/2$. For simplicity, let $\alpha_1=\alpha_2=\alpha$. We find regimes of operation where optimal coded randomized policies strictly improve the EWSAoI over  uncoded randomized policies such as \cite{2018-IgorKadota}. 

From Theorem \ref{thm: R-age}, we find the EWSAoI as follows
\begin{equation}\label{equ: R-symmetric AoI}
\footnotesize
\begin{aligned}
\mathbb{E}[J^{R}]=&\frac{\alpha}{(1-\epsilon_{12}(\epsilon))(1-\mu)}\\
\times&\Bigg(\frac{(1-\epsilon)^{3}(1-2\mu)(1-\mu)^{2}-(1-\epsilon_{12}(\epsilon))^{2}(\epsilon-\epsilon_{12}(\epsilon))\mu^{3}}{\Big((1-2\mu)(1-\epsilon)^{2}(1-\mu)-\mu^{2}(1-\epsilon_{12}(\epsilon))(\epsilon-\epsilon_{12}(\epsilon))\Big)\mu}\\
+&\frac{(1-2\mu)(\epsilon-\epsilon_{12}(\epsilon))}{(2\epsilon-1-\epsilon_{12}(\epsilon))\mu+1-\epsilon}\Bigg).
\end{aligned}
\end{equation}
First, we get optimal randomized policies (i.e., optimal probabilities) of the symmetric case by the following lemma.
\begin{lemma}\label{lem: symmetric case}
For any given $\epsilon$, 
\begin{equation}\label{equ: symmetric case}
\footnotesize
\mu^{*}=\left\{\begin{aligned}
&\frac{\sqrt{1-\epsilon}}{\sqrt{1-\epsilon}+\sqrt{\epsilon-\epsilon_{12}(\epsilon)}}&\quad& \epsilon_{12}(\epsilon)-2\epsilon+1<0\\
&1/2&\quad&\epsilon_{12}(\epsilon)-2\epsilon+1\geq0.
\end{aligned}\right.
\end{equation}
 \end{lemma}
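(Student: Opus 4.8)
The plan is to treat the minimization as a single-variable calculus problem on the compact interval $[0,1/2]$. Fixing $\epsilon$ and $\epsilon_{12}=\epsilon_{12}(\epsilon)$, I would write $f(\mu):=\mathbb{E}[J^{R}]$ as given in \eqref{equ: R-symmetric AoI} and seek $\min_{\mu\in[0,1/2]}f(\mu)$. To tame the algebra I would first introduce the shorthands $a:=1-\epsilon$ and $b:=\epsilon-\epsilon_{12}(\epsilon)$, noting $a,b>0$ (since $\epsilon<1$ and $\epsilon>\epsilon_{12}$) and $1-\epsilon_{12}=a+b$. Under this substitution each of the two summands in \eqref{equ: R-symmetric AoI} becomes a ratio of polynomials in $\mu$ with coefficients in $a,b$, which makes the subsequent differentiation manageable.

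Next I would differentiate. Since $f$ is a sum of rational functions of $\mu$, $f'$ is itself rational and the stationarity condition is that its numerator vanish. I expect that, after clearing denominators and factoring out the positive common factors, the equation $f'(\mu)=0$ collapses to the balance condition $b\,\mu^{2}=a\,(1-\mu)^{2}$. Taking positive square roots gives $\mu\sqrt{b}=(1-\mu)\sqrt{a}$, whose unique root in $(0,1)$ is $\mu^{*}=\frac{\sqrt{a}}{\sqrt{a}+\sqrt{b}}=\frac{\sqrt{1-\epsilon}}{\sqrt{1-\epsilon}+\sqrt{\epsilon-\epsilon_{12}(\epsilon)}}$, exactly the interior value claimed; the other algebraic root $\frac{\sqrt{a}}{\sqrt{a}-\sqrt{b}}$ is either negative or exceeds $1$ and is discarded. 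This balance condition is the age-analogue of the equal-throughput condition that saturates the Cauchy--Schwarz step in the proof of Theorem \ref{thm: lowerbound}, which serves as a useful sanity check on the form of $\mu^{*}$.

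To upgrade stationarity to global minimality on $[0,1/2]$, I would verify that $f(\mu)\to\infty$ as $\mu\to0^{+}$ (the $1/\mu$ factor in the first summand of \eqref{equ: R-symmetric AoI} makes this immediate, and it also rules out the boundary point $\mu=0$), and that $f$ is unimodal on $(0,1/2]$: concretely, that $f'(\mu)<0$ for $\mu<\mu^{*}$ and $f'(\mu)>0$ for $\mu>\mu^{*}$, so that $f$ decreases up to $\mu^{*}$ and increases thereafter. I would then decide which regime applies by locating $\mu^{*}$ relative to the boundary: a direct computation gives $\mu^{*}<1/2\iff\sqrt{1-\epsilon}<\sqrt{\epsilon-\epsilon_{12}(\epsilon)}\iff\epsilon_{12}(\epsilon)-2\epsilon+1<0$. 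In that regime $\mu^{*}$ is interior and is the constrained minimizer; otherwise $\mu^{*}\geq1/2$, so $f$ is non-increasing throughout $[0,1/2]$ and the minimum is attained at the boundary $\mu=1/2$. Combining the two regimes yields \eqref{equ: symmetric case}.

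The main obstacle is purely computational: carrying out the differentiation of the unwieldy expression \eqref{equ: R-symmetric AoI} and showing that its numerator factors cleanly into the quadratic $b\mu^{2}-a(1-\mu)^{2}$ times a strictly positive remainder. Establishing the sign pattern of $f'$ (equivalently, the unimodality) rigorously rather than by inspection is the second delicate point, since one must confirm that the rational remainder does not itself vanish or change sign on $(0,1/2]$ and thereby introduce spurious stationary points.
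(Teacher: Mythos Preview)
Your proposal is correct and follows essentially the same route as the paper's proof: differentiate \eqref{equ: R-symmetric AoI}, identify the two stationary points (which in your $a,b$ notation are $\sqrt{a}/(\sqrt{a}\pm\sqrt{b})$, matching the paper's $\mu^{(1)},\mu^{(2)}$), and then case-split on the sign of $\epsilon_{12}(\epsilon)-2\epsilon+1=a-b$ to locate the admissible critical point relative to the boundary $\mu=1/2$. The paper, like you, asserts rather than fully verifies the monotonicity of $f$ on either side of the critical point, so your honest flagging of that step as the delicate one is apt.
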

The proof of Lemma~\ref{lem: symmetric case} is in Appendix~\ref{App: Lemma 3}. 
From Lemma~\ref{lem: symmetric case}, if $\epsilon_{12}(\epsilon)-2\epsilon+1\geq0$, then the optimal probability of action $1$ (action $2$) is $\mu^{*}=1/2$. In this case, no packets reach the XOR, i.e., action $3$ is never chosen. So coded randomized policies degenerate to uncoded randomized algorithms. Thus coded and uncoded randomized policies reach the same age.
If $\epsilon_{12}(\epsilon)-2\epsilon+1<0$, since $\mu^{*}$ is the {\it unique} minimum point of \eqref{equ: R-symmetric AoI} as shown in Appendix~\ref{App: Lemma 3}, then 
\begin{align*}
 \mathbb{E}[J^{R}]\big|_{\{\mu^{*}\}}< \mathbb{E}[J^{R}]\big|_{\{\mu=1/2\}}
\end{align*}
which implies that the age of optimal coded randomized policies is strictly lower than that of optimal uncoded randomized policies. So we arrive at the following result.
\begin{theorem}\label{thm: symmetric case}
Optimal coded and uncoded randomized policies achieve the same EWSAoI if and only if $\epsilon_{12}(\epsilon)-2\epsilon+1\geq 0$. Otherwise,  optimal coded randomized policies strictly outperform uncoded randomized policies. 
\end{theorem}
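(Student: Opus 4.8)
The plan is to reduce the statement to a direct consequence of Lemma~\ref{lem: symmetric case} by observing that, in the symmetric parametrization, the uncoded randomized policy is exactly the boundary case $\mu=1/2$, where $\mu_3=1-2\mu=0$ and action~$3$ (the XOR action) is never used. The optimal coded policy minimizes the one-dimensional age expression \eqref{equ: R-symmetric AoI} over all admissible $\mu\in[0,1/2]$, whereas the optimal uncoded policy is pinned to $\mu=1/2$. Hence the whole comparison collapses to the question of whether the unconstrained minimizer $\mu^*$ supplied by Lemma~\ref{lem: symmetric case} equals $1/2$ or lies strictly below it, and I would organize the proof around the two regimes of that lemma.

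First I would treat the regime $\epsilon_{12}(\epsilon)-2\epsilon+1\geq 0$. Here Lemma~\ref{lem: symmetric case} gives $\mu^*=1/2$, so the optimal coded policy coincides with the uncoded policy and the two EWSAoI values are identical; this yields the ``if'' direction of the iff. Next, for the regime $\epsilon_{12}(\epsilon)-2\epsilon+1<0$, I would first check that the minimizer is strictly interior, i.e. $\mu^*<1/2$: from $\mu^*=\sqrt{1-\epsilon}/(\sqrt{1-\epsilon}+\sqrt{\epsilon-\epsilon_{12}(\epsilon)})$, the inequality $\mu^*<1/2$ is equivalent to $1-\epsilon<\epsilon-\epsilon_{12}(\epsilon)$, which is precisely $\epsilon_{12}(\epsilon)-2\epsilon+1<0$. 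Thus $\mu^*\neq 1/2$, so $\mu_3=1-2\mu^*>0$ and coding is genuinely employed.

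To turn $\mu^*\neq 1/2$ into a \emph{strict} age gap, I would invoke the uniqueness of $\mu^*$ as the minimizer of \eqref{equ: R-symmetric AoI} established in Appendix~\ref{App: Lemma 3}: any admissible $\mu$ other than $\mu^*$, and in particular $\mu=1/2$, gives a strictly larger value, so $\mathbb{E}[J^R]|_{\mu^*}<\mathbb{E}[J^R]|_{\mu=1/2}$. This proves the ``otherwise strictly outperforms'' claim, and since strict inequality implies the two ages are not equal, it also supplies the ``only if'' direction of the iff by contraposition. The case split then exhausts both regimes and closes the argument.

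The genuinely hard part is not in this theorem but in the supporting Lemma~\ref{lem: symmetric case}: the strict inequality rests entirely on $\mu^*$ being the \emph{unique} minimizer of the rational age function \eqref{equ: R-symmetric AoI} on $[0,1/2]$. Without uniqueness I could only conclude $\leq$, not $<$. Verifying uniqueness requires showing that the only stationary point of \eqref{equ: R-symmetric AoI} in the regime $\epsilon_{12}(\epsilon)-2\epsilon+1<0$ is $\mu^*$ and that it is a strict minimum, which is the technical content deferred to the appendix. Once that uniqueness (and the accompanying closed form for $\mu^*$) is granted, the theorem follows immediately from the two-case comparison above, with no further computation needed.
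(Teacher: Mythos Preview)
Your proposal is correct and follows essentially the same route as the paper: the paper's argument (given in the paragraph immediately preceding Theorem~\ref{thm: symmetric case}) likewise splits on the sign of $\epsilon_{12}(\epsilon)-2\epsilon+1$, identifies the uncoded policy with $\mu=1/2$, and in the negative-sign regime invokes the uniqueness of the minimizer $\mu^*$ from Appendix~\ref{App: Lemma 3} to obtain the strict inequality. Your explicit verification that $\mu^*<1/2$ in that regime is already part of the appendix proof, so no new idea is introduced.
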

\begin{remark}
When the channels are independent, i.e., $\epsilon_{12}(\epsilon)=\epsilon^{2}$, coded and uncoded randomized policies have the same performance with respect to age.
\end{remark}

\section{Max-Weight Policies}\label{sec: Standard Max-Weight Policies}
\subsection{The Algorithm}
In this section, we devise deterministic policies using techniques from Lyapunov Optimization. Denote the EWSAoI for Max-Weight policies (in the long run) as $\mathbb{E}[J^{MW}]$. Denote $\vec{h}(k)=(h_{1}(k),h_{2}(k))$ and $\vec{s}(k)=(h_{1}(k),h_{2}(k),w_{1}(k),w_{2}(k))$.
Define the Lyapunov function
\begin{align}
L(\vec{h}(k))=\frac{1}{2}\sum_{i=1}^{2}\alpha_{i}h_{i}^{2}(k),
\end{align}
and the one-slot Lyapunov Drift 
\begin{align}
\Theta(\vec{h}(k))=\mathbb{E}[L(\vec{h}(k+1))-L(\vec{h}(k))|\vec{s}(k)].
\end{align}

We  devise the Max-Weight (MW) policy such that it minimizes the one-slot Lyapunov drift:
\begin{definition}\label{def: MW-policy}
In each slot $k$, the MW policy chooses the action that has the maximum weight as shown in following Table:
\begin{figure}[h!]
\begin{tabular}{|c|c|}
\hline 
 $\!\!\!A(k)\!\!\!$& Weights\\
\hline 
$1$&$\frac{\alpha_{1}(1-\epsilon_{1})}{2}h_{1}(k)(h_{1}(k)+2)$ \\
\hline 
$2$&$\frac{\alpha_{2}(1-\epsilon_{2})}{2}h_{2}(k)(h_{2}(k)+2)$ \\
\hline
$3$&$\!\!\frac{1}{2}\sum_i\!\alpha_{i}(\hspace{-.05cm}1\!\hspace{-.05cm}-\hspace{-.05cm}\!\epsilon_{i}\hspace{-.05cm})1_{\!\{w_{i}(k)\!>\!0\}}\hspace{-.05cm}(h_{i}^{2}(k)\!+\!2h_{i}(k)\!\hspace{-.05cm}-\hspace{-.05cm}\!w_{i}^{2}(k)\!\hspace{-.05cm}-\hspace{-.05cm}\!2w_{i}(k)\hspace{-.05cm})\!\!$\\
\hline
\end{tabular}
\end{figure}	
\end{definition}
\begin{theorem}\label{thm: MW-policy}
The MW policy defined in Definition \ref{def: MW-policy} minimizes the one-slot Lyapunov Drift in each slot.
\end{theorem}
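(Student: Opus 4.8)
The plan is to exploit that, conditioned on the state $\vec{s}(k)$, the term $L(\vec{h}(k))$ in the drift $\Theta(\vec{h}(k))=\mathbb{E}[L(\vec{h}(k+1))|\vec{s}(k)]-L(\vec{h}(k))$ is a fixed constant. Hence minimizing $\Theta$ over the action $A(k)$ is equivalent to minimizing $\mathbb{E}[L(\vec{h}(k+1))|\vec{s}(k)]=\frac{1}{2}\sum_i\alpha_i\,\mathbb{E}[h_i^2(k+1)|\vec{s}(k)]$. I would then subtract the action-independent reference $\frac{1}{2}\sum_i\alpha_i(h_i(k)+1)^2$ (the value attained when no packet is delivered, since by \eqref{equ: BM-age} each age otherwise increments by one), so that minimizing the drift becomes equivalent to \emph{maximizing} the expected age reduction $W(A):=\frac{1}{2}\sum_i\alpha_i\big((h_i(k)+1)^2-\mathbb{E}[h_i^2(k+1)|\vec{s}(k)]\big)$ over $A\in\{1,2,3\}$. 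The theorem then reduces to computing $W(A)$ for each action and identifying it with the corresponding row of Definition \ref{def: MW-policy}.

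For each action I would evaluate $\mathbb{E}[h_i^2(k+1)|\vec{s}(k)]$ directly from the recursion \eqref{equ: BM-age}, taking the expectation only over the channel variables $Z_i(k)$. Under $A(k)=1$, user $1$ is served: with probability $1-\epsilon_1$ the packet arrives and $h_1(k+1)=1$, otherwise $h_1(k+1)=h_1(k)+1$; user $2$ is never served, so $h_2(k+1)=h_2(k)+1$ deterministically and contributes nothing to $W(1)$. Substituting and expanding the square gives $W(1)=\frac{\alpha_1(1-\epsilon_1)}{2}\big((h_1(k)+1)^2-1\big)=\frac{\alpha_1(1-\epsilon_1)}{2}h_1(k)(h_1(k)+2)$, exactly the first tabulated weight; action $2$ follows by the symmetric computation.

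For the coded action $A(k)=3$ I would first justify, via the side-information argument of \cite{2009-L.Georgiadis}, that each user $i$ recovers its intended packet from $Q_2^{(i)}$ whenever $Z_i(k)=1$ (probability $1-\epsilon_i$), because the complementary XOR-ed packet is already available at user $i$ through feedback; in that event $h_i(k+1)=w_i(k)+1$, and otherwise $h_i(k+1)=h_i(k)+1$. When $w_i(k)=0$ the queue $Q_2^{(i)}$ holds no useful packet for user $i$, so $h_i(k+1)=h_i(k)+1$ regardless of $Z_i(k)$ and the reduction for that user vanishes; this is precisely what the factor $1_{\{w_i(k)>0\}}$ encodes. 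Carrying out the substitution and expanding both $(h_i(k)+1)^2$ and $(w_i(k)+1)^2$ yields $W(3)=\frac{1}{2}\sum_i\alpha_i(1-\epsilon_i)1_{\{w_i(k)>0\}}\big(h_i^2(k)+2h_i(k)-w_i^2(k)-2w_i(k)\big)$, matching the last row. Since $W(1),W(2),W(3)$ coincide with the three tabulated weights, selecting the maximum-weight action maximizes $W(A)$ and therefore minimizes $\Theta(\vec{h}(k))$, which proves the claim.

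The main obstacle is not analytical depth but careful bookkeeping in the coded case: one must correctly argue that each user's effective decoding probability under the XOR transmission is $1-\epsilon_i$ rather than a joint quantity involving $\epsilon_{12}$, and treat the empty-queue case $w_i(k)=0$ cleanly so that the indicator emerges naturally. The remaining algebra — the expansion of the squared-age terms — is routine.
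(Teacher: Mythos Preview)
Your proposal is correct and follows essentially the same route as the paper: both reduce minimizing $\Theta(\vec{h}(k))$ to computing, for each action, the expected value of $h_i^2(k+1)$ (equivalently, the expected reduction relative to the ``no-delivery'' baseline $(h_i(k)+1)^2$) and identifying the result with the tabulated weights. The only cosmetic difference is that the paper introduces delivery indicators $d_i(k),t_i(k)$ and writes a single unified formula for $h_i(k+1)$ before expanding, whereas you compute the three actions case by case; the algebra and the handling of the $w_i(k)=0$ situation via the indicator $1_{\{w_i(k)>0\}}$ are otherwise identical.
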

\begin{proof}
The proof of Theorem \ref{thm: MW-policy} can be found in  Appendix~\ref{App: Theorem 4}.
\end{proof}

\subsection{The Upper Bound of EWSAoI}
While the exact analysis of the resulting EWSAoI is difficult for the above MW policy, we derive an upper bound on it in Theorem \ref{thm: MW-the upper bound}. The proof is deferred to Appendix~\ref{App: Theorem 5}.

\begin{theorem}\label{thm: MW-the upper bound}
The EWSAoI achieved by the proposed Max-Weight policy is upper bounded by
\begin{align*}
\mathbb{E}[J^{MW}]\leq\sqrt{\frac{1}{2}\sum_{i=1}^{2}\frac{\alpha_{i}}{\mu_{i}(1-\epsilon_{i})}\sum_{i=1}^{2}\alpha_{i}\Psi_{i}}+\frac{1}{2}\sum_{i=1}^{2}\alpha_{i}\Phi_{i}	
\end{align*}
where 
$$\Phi_{i}=\frac{1-\mu_{3}P_{ne}^{i}(1-\epsilon_{i})-\mu_{i}(1-\epsilon_{i})}{\mu_{i}(1-\epsilon_{i})}$$
$$\Psi_{i}=1-\mu_{3}P_{ne}^{i}(1-\epsilon_{i})+\frac{\big(1-\mu_{i}(1-\epsilon_{i})-\mu_{3}P_{ne}^{i}(1-\epsilon_{i})\big)^{2}}{\mu_{i}(1-\epsilon_{i})}$$
$$P_{ne}^{i}=\frac{\mu_{i}(\epsilon_{i}-\epsilon_{12})}{\mu_{3}(1-\epsilon_{i})+\mu_{i}(\epsilon_{i}-\epsilon_{12})-\mu_{3}\mu_{i}(1-\epsilon_{i})(\epsilon_{i}-\epsilon_{12})}.$$
\end{theorem}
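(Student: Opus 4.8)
The plan is to run the Lyapunov-drift argument set up in Definition~\ref{def: MW-policy} and Theorem~\ref{thm: MW-policy}, and to compare the Max-Weight policy against the stationary randomized policy of Section~\ref{sec: stationary randomized policy}. The key leverage is that the MW policy minimizes the one-slot drift $\Theta(\vec h(k))$ pointwise in the state, so at every slot its conditional drift is no larger than the drift induced by \emph{any} distribution over the three actions; in particular, for the randomized policy with probabilities $(\mu_1,\mu_2,\mu_3)$ I would write, for every $k$,
$$\Theta^{MW}(\vec h(k))\le\Theta^{R}(\vec h(k)),$$
and then work entirely with the right-hand side, which is a convex combination over the three per-action drifts of $L(\vec h)=\tfrac12\sum_i\alpha_i h_i^2$.

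First I would compute $\Theta^{R}(\vec h(k))$ from the recursion \eqref{equ: BM-age}. For user $i$, action $i$ resets $h_i$ to $1$ with probability $\mu_i(1-\epsilon_i)$, the coded action resets $h_i$ to $w_i+1$ with probability $\mu_3(1-\epsilon_i)1_{\{w_i>0\}}$, and otherwise $h_i$ increments. Using Remark~\ref{w<=h-1} (so $(w_i+1)^2\le h_i^2$) to dominate the coded term, and replacing the state-dependent indicator $1_{\{w_i>0\}}$ by the stationary non-emptiness probability $P_{ne}^{i}$ of the virtual queue $Q_2^{(i)}$ under the randomized policy, I expect a per-state quadratic bound of the form
$$\Theta^{R}(\vec h(k))\le\frac{1}{2}\sum_{i=1}^{2}\alpha_{i}\Big[\big(1-\mu_{3}P_{ne}^{i}(1-\epsilon_{i})\big)-\mu_{i}(1-\epsilon_{i})h_{i}^{2}(k)+2\big(1-\mu_{i}(1-\epsilon_{i})-\mu_{3}P_{ne}^{i}(1-\epsilon_{i})\big)h_{i}(k)\Big].$$
Writing $p_i:=\mu_i(1-\epsilon_i)$ and completing the square in $h_i$ is exactly what produces the quantities in the theorem: the vertex of the downward parabola is $\Phi_i$ and the residual constant is $\Psi_i$, so the bound reads $\Theta^{R}\le\tfrac12\sum_i\alpha_i\Psi_i-\tfrac12\sum_i\alpha_i p_i(h_i-\Phi_i)^2$.

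Next I would telescope $\Theta^{MW}(\vec h(k))$ over $k=1,\dots,T$, use $L\ge 0$ to discard the boundary term $\mathbb{E}[L(\vec h(T+1))]$, and apply Jensen ($\overline{h_i^2}\ge\bar h_i^2$) on the Ces\`aro means $\bar h_i=\lim_T\tfrac1T\sum_k\mathbb{E}[h_i(k)]$. This converts the pointwise drift bound into the single quadratic constraint
$$\sum_{i=1}^{2}\alpha_{i}\,\mu_{i}(1-\epsilon_{i})\big(\bar{h}_{i}-\Phi_{i}\big)^{2}\le\sum_{i=1}^{2}\alpha_{i}\Psi_{i}.$$
Finally I would maximize the objective $\mathbb{E}[J^{MW}]=\tfrac12\sum_i\alpha_i\bar h_i=\tfrac12\sum_i\alpha_i\Phi_i+\tfrac12\sum_i\alpha_i(\bar h_i-\Phi_i)$ subject to this constraint via Cauchy--Schwarz, namely $\sum_i\alpha_i(\bar h_i-\Phi_i)\le\sqrt{\sum_i\frac{\alpha_i}{\mu_i(1-\epsilon_i)}}\,\sqrt{\sum_i\alpha_i\mu_i(1-\epsilon_i)(\bar h_i-\Phi_i)^2}$; carrying the constants through then yields the product-of-sums under the square root and the advertised bound
$$\mathbb{E}[J^{MW}]\le\sqrt{\frac{1}{2}\sum_{i=1}^{2}\frac{\alpha_{i}}{\mu_{i}(1-\epsilon_{i})}\sum_{i=1}^{2}\alpha_{i}\Psi_{i}}+\frac{1}{2}\sum_{i=1}^{2}\alpha_{i}\Phi_{i}.$$

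The hard part will be the replacement in the second step: along the MW trajectory the coded-action contribution genuinely depends on the current occupancy $1_{\{w_i(k)>0\}}$, whereas $P_{ne}^{i}$ is a stationary quantity of the \emph{randomized} policy. Justifying that $\mu_3 P_{ne}^{i}(1-\epsilon_i)$ is the correct effective coded service rate in the drift comparison requires treating $Q_2^{(i)}$ as an auxiliary Markov/renewal process (arrivals at rate $\mu_i(\epsilon_i-\epsilon_{12})$, departures governed by actions $i$ and $3$) and computing its stationary non-emptiness probability, exactly the formula in the theorem. The accompanying technical burden is establishing enough stationarity/stability that the Ces\`aro limits $\bar h_i$ and $\overline{h_i^2}$ exist and the telescoped boundary term is asymptotically negligible; the completion-of-square and Cauchy--Schwarz steps are then routine.
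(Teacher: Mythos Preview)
Your plan is essentially the paper's proof: bound the MW drift by the randomized policy's drift, use $w_i\le h_i-1$ to collapse the coded term, complete the square to obtain $\Phi_i$ and $\Psi_i$, then telescope and combine Jensen with Cauchy--Schwarz (the paper applies Cauchy--Schwarz per slot before averaging rather than after, but the two orderings give the same inequality). The step you flag as delicate---replacing $1_{\{w_i(k)>0\}}$ by the stationary probability $P_{ne}^{i}$---is handled in the paper exactly by computing $P_{ne}^{i}$ from a two-state Markov chain for $Q_2^{(i)}$ (Lemma~\ref{lem: empty queue} and Corollary~\ref{cor: P_empty}) and then substituting it directly into the drift inequality, with no further argument.
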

\begin{remark}
Setting $\mu_3=0$, we recover the upper bound of \cite{2018-IgorKadota} which corresponds to uncoded randomized policies.
\end{remark}

\section{Numerical Results and Discussion}\label{sec: Numerical results and discussions}

\begin{figure}[ht!]
\begin{tikzpicture}
\begin{axis}
[axis lines=left,
width=2.8in,
height=2.5in,
scale only axis,
xlabel=$\epsilon$,
ylabel=EWSAoI,
xmin=0.5,xmax=.9,
ymin=0,ymax=8,
xtick={0.1,0.3,0.5,0.7,0.9},
ytick={0,4,8,12,16,20},
legend pos=north west,
ymajorgrids=true,
grid style=dashed,
scatter/classes={
a={mark=+, draw=black},
b={mark=star, draw=black}
}
]

\addplot[smooth,color=black]
coordinates{(0.1,0.7563)(0.15,0.7727)(0.2,0.7914)(0.25,0.8130)(0.3,0.8380)(0.35,0.8672)(0.4,0.9016)(0.45,0.9428)(0.5,0.9926)(0.55,1.0541)(0.6, 1.1316)(0.65,1.2319)(0.7, 1.3664)(0.75,1.5556)(0.8,1.8407)(0.85,2.3174)(0.9,3.2733)(0.95,5.2809)
};

\addplot[
  color=red,
  mark=star,]
coordinates{(0.1,1.0647)(0.15,1.1274)(0.2,1.1978)(0.25,1.2777)(0.3,1.3689)(0.35,1.4742)(0.4,1.5971)(0.45,1.7423)(0.5,1.9165)(0.55,2.1286)(0.6,2.3847)(0.65,2.7011)(0.7,3.1060)(0.75,3.6491)(0.8,4.4266)(0.85,5.6567)(0.9,7.9711)(0.95,13.9950)
};

\addplot[
  color=green,
  mark=square,]
coordinates{(0.1,1.0647)(0.15,1.1274)(0.2,1.1978)(0.25,1.2777)(0.3,1.3689)(0.35,1.4742)(0.4,1.5971)(0.45,1.7423)(0.5,1.9165)(0.55,2.1295)(0.6,2.3956)(0.65,2.7379)(0.7,3.1942)(0.75,3.8330)(0.8,4.7913)(0.85,6.3884)(0.9,9.5826)(0.95,19.1652)
};

\addplot[
color=blue,dashed,
mark=triangle]
coordinates{(0.1,0.8275)(0.15,0.8749)(0.2,0.9285)(0.25,0.9892)(0.3,1.0590)(0.35,1.1390)(0.4,1.2306)(0.45,1.3398)(0.5,1.4692)(0.55,1.6252)(0.6,1.8204)(0.65,2.0697)(0.7,2.3985)(0.75,2.8586)(0.8,3.5473)(0.85,4.6932)(0.9,6.9798)(0.95,13.9734)
};

\addplot[color=purple, mark=o]
coordinates{(0.1,0.8265)(0.15,0.8722)(0.2,0.9241)(0.25,0.9834)(0.3,1.0518)(0.35,1.1304)(0.4,1.2213)(0.45,1.3312)(0.5,1.4619)(0.55,1.6210)(0.6,1.8203)(0.65,2.0778)(0.7,2.4195)(0.75,2.8972)(0.8,3.6165)(0.85,4.8137)(0.9,7.2164)(0.95,14.5141)
};

%
%
%
%

\legend{Lower Bound, Coded Randomized, Uncoded Randomized, Coded Max-Weight, Uncoded Max-Weight}
\end{axis}
\end{tikzpicture}
\caption{EWSAoI as a function of erasure probability for a class of dependent channels with $\epsilon_{12}=\epsilon^{2}/5$.}
\label{fig-age1}
\end{figure}
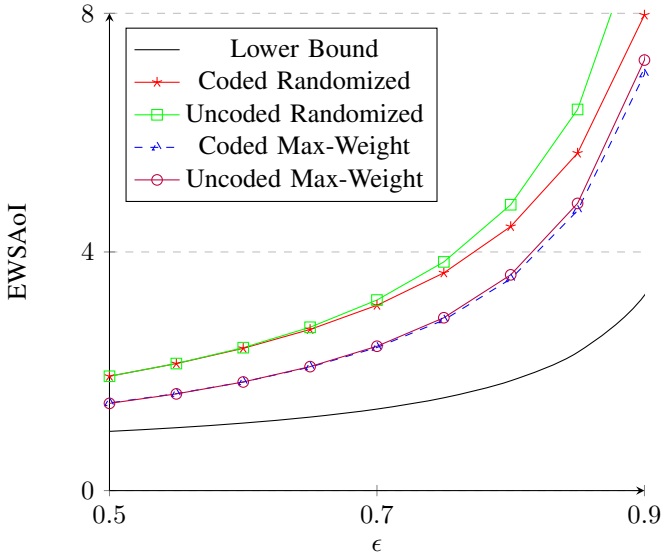

In this section, we compare the performance of the proposed coded algorithms with the uncoded algorithms in \cite{2018-IgorKadota}. In Figure \ref{fig-age1}, we plot the EWSAoI of optimal coded and uncoded randomized policies, coded and uncoded Max-Weight policies, as well as the lower bound of Theorem \ref{thm: lowerbound}. 
We have chosen $\alpha_{1}=0.3$, $\alpha_{2}=0.7$, and $\epsilon_{1}=\epsilon_{2}=\epsilon$, where $\epsilon$ varies from $0.5$ to $0.9$. We consider a dependent channel of $\epsilon_{12}=\epsilon^{2}/5$. We see that for the proposed policies, coding is beneficial when the channel erasure is larger than a threshold. Although the gain is  small for MW policies, we believe the gain will be  more significant over networks with many users.
In addition, we also consider Dynamic Programming (DP) policies in the class of 3-action coding schemes, the
EWSAoI performance of DP policies is very close to that of MW policies for all $\epsilon$.

\appendices

\section{Proof of Lemma~\ref{lem: buffer size 1}}\label{App: Lemma buffer size 1}
\begin{proof}
We consider two types of policies: policies with buffer size $1$, denote by $\pi_{1}$, and policies with buffer size larger than $1$, denote by $\pi_{2}$.
$\pi_1$ works on buffers of size $1$ while $\pi_{2}$ works on buffers of infinite size. To differentiate the two policies and their corresponding queues, we label the packets inside the queues by \emph{new} and \emph{old}. A new packet in a queue means that this packet has not been previously received at its intended user. A packet in a queue is labeled old if there is a newer packet in the same queue or if the packet (or a fresher packet) is already received at its intended user. We refer to the freshest old packet as the old packet. At a given time slot, denote the new packet and the  old packet for user $1$ (resp. user $2$) as $p_{new}$ and $p_{old}$ (resp. $q_{new}$ and $q_{old}$) in $Q_{1}^{(i)}$ (resp. $Q_{2}^{(i)}$), $i=1,2$, respectively.

Consider a fixed erasure pattern for the channels to the users and an arbitrary time slot $k$. Denote the generation time of the new packet and the  old packet for user $1$ (resp. user $2$) in $Q_{1}^{(i)}$ (resp. $Q_{2}^{(i)}$) as $k_{pn}$ and $k_{po}$ (resp. $k_{qn}$ and $k_{qo}$), $i=1,2$.
We measure the efficiency of the policies at each slot $k$ by their resulting age at both users. To this end, we show that no matter what policy $\pi_2$ chooses, there is a policy of type $\pi_1$ that is at least as efficient as $\pi_2$ in terms of age at both users.

\begin{enumerate}
\item $\pi_{2}$ chooses action $i$, $i=1,2$, at slot $k$.
In this case, $\pi_{2}$ transmits a packet from $Q_{1}^{(i)}$. By sending the new packet from $Q_{1}^{(i)}$, policy $\pi_1$ attains a smaller age for user $i$ at slot $k+1$ and the age remains the same for the other user. See also \cite[Theorem 2]{2019-E.Najm} for a similar argument.
\item $\pi_{2}$ chooses action $3$ at slot $k$. First of all, note that it is always strictly better to choose action $i$ over action $3$ if queue $Q_2^{(\backslash i)}$ is empty, and then a similar analysis with the previous case can be done. Therefore, we consider the case that both queues $Q_2^{(1)}$ and $Q_2^{(2)}$ are non-empty and a coded packet of the form $p_{old}\oplus q_{old}$ is sent by $\pi_2$. To compare with policies of type $\pi_1$, we need to consider four possible cases depending on if there are any new packets in $Q_2^{(1)}$ and $Q_2^{(2)}$. Note that type $\pi_1$ policies are effectively oblivious to old packets.
\begin{itemize}
\item There are packets of type $p_{old}$ as well as $p_{new}$ in  $Q_{1}^{(2)}$ and similarly $q_{old},q_{new}$ in $Q_{2}^{(2)}$. In this case, $\pi_1$ can choose action $3$ and send the coded packet $p_{new}\oplus q_{new}$.  This way, if there is a delivery at user~$1$ (resp. user~$2$), the age of user~$1$ (resp. user~$2$) will drop to $k+1-k_{po}$ (resp. $k+1-k_{qo}$) under $\pi_2$ and to $k+1-k_{pn}$ (resp. $k+1-k_{qn}$) under $\pi_1$ at time $k+1$. Since $k_{pn}>k_{po}$ and $k_{qn}>k_{qo}$, $\pi_1$ is strictly more efficient than $\pi_2$ at slot $k+1$.
\item There are no packets of type $p_{new}$ in $Q_{1}^{(2)}$ but  there are packets of type $q_{old}$ as well as $q_{new}$ in $Q_{2}^{(2)}$. Since there is no packets of type new in $Q_1^{(2)}$,  the existing old packet (or a fresher packet) must have been previously received at user $1$. So the current age at user~$1$ is at most $k-k_{po}$ and will increase at most up to $k+1-k_{po}$ at slot $k+1$. The best that a policy of type $\pi_2$ can do by action $3$ is to drop the age at user~$1$ to $k+1-k_{po}$ and drop the age at user~$2$ to $k+1-k_{qo}$ at slot $k+1$. $\pi_1$ can choose action $2$ and perform strictly better than $\pi_2$ at slot $k+1$ because the newest packet in $Q_1^{(2)}$ is fresher than any new packet in $Q_2^{(2)}$.
\item There are no packets of type $q_{new}$ in $Q_{2}^{(2)}$ but  there are packets of type $p_{old}$ as well as $p_{new}$ in $Q_{2}^{(1)}$. This case is similar to the previous case.
\item There are only packets of type $p_{old}$ in $Q_{1}^{(2)}$ and  $q_{old}$ in $Q_{2}^{(2)}$. Since there is no packet of type new in the queues, it means that the old packets (or fresher ones) are previously received at the users under $\pi_{1}$. So under $\pi_1$, the current age at user $1$ (resp. user $2$) is at most $k-k_{po}$ (resp. $k-k_{qo}$) at this time and will increase at most by $1$ at time $k+1$. The age under $\pi_2$, however, can only become $k+1-k_{po}$ (resp. $k+1-k_{qo}$) if there is a delivery. Hence $\pi_1$ is at least as efficient as $\pi_2$.
\end{itemize}
\end{enumerate}
So at each time, the  average age of policy $\pi_{1}$ is smaller than or equal to that of policy $\pi_{2}$. 

\end{proof}

\section{Proof of Lemma~\ref{lem: distribution of I_i}}\label{App: Lemma 1}
\begin{proof}
From the analysis above,
 the probability of Case (\romannumeral1) is $$P_{1}=(\mu_{1}\epsilon_{12}+\mu_{2}+\mu_{3})^{\ell-1}\mu_1(1-\epsilon_1).$$
 The probability of Case (\romannumeral2-1) is
\begin{equation*}
\footnotesize
\begin{aligned}
 P_{2}=&\sum_{t=1}^{\ell-1}(\mu_{1}\epsilon_{12}+\mu_{2}+\mu_{3})^{t-1}(\mu_{1}\epsilon_{1}+\mu_{2}+\mu_{3}\epsilon_{1})^{\ell-1-t}\\
 \times&\mu_{1}(\epsilon_{1}-\epsilon_{12})\mu_1(1-\epsilon_1)\\
+&\sum_{2\leq d\leq\ell}\sum_{t=1}^{\ell-d}\Big((\mu_{1}\epsilon_{12}+\mu_{2}+\mu_{3})^{t-1}(\mu_{1}\epsilon_{1}+\mu_{2}+\mu_{3}\epsilon_{1})^{\ell-d-t}\\
\times&(\mu_{1}(\epsilon_{1}-\epsilon_{12}))^2(\mu_{2}+\mu_1\epsilon_{12}+\mu_3\epsilon_1)^{d-2}\mu_3(1-\epsilon_1)\Big).
\end{aligned}
\end{equation*}
The probability of Case (\romannumeral2-2) is
\begin{equation*}
\begin{aligned}
P_{3}=&\sum_{2\leq d\leq\ell}(\mu_{1}\epsilon_{12}+\mu_{2}+\mu_{3})^{\ell-d}\\
\times&\mu_{1}(\epsilon_{1}-\epsilon_{12})(\mu_{2}+\mu_1\epsilon_{12}+\mu_3\epsilon_1)^{d-2}\mu_3(1-\epsilon_1).
\end{aligned}
\end{equation*}
Denote \begin{align*}
x_{1}=&\mu_1\epsilon_{12}+\mu_{2}+\mu_3\\
y_{1}=&\mu_1\epsilon_{1}+\mu_{2}+\mu_3\epsilon_1\\
z_{1}=&\mu_1\epsilon_{12}+\mu_{2}+\mu_3\epsilon_1,
\end{align*}
thus
\begin{align*}
&\Pr(I_1=\ell)\nonumber=\sum_{d\leq\ell}\Pr(I_{1}=\ell, D_1=d)\nonumber
\\=&P_{1}+P_{2}+P_{3}=eq_{1}+eq_{2}+eq_{3}+eq_{4}.
\end{align*}
where
\begin{equation*}
\begin{aligned}
eq_{1}=&x_{1}^{\ell-1}\mu_1(1-\epsilon_1)\\
eq_{2}=&\mu_{1}^{2}(\epsilon_{1}-\epsilon_{12})(1-\epsilon_1)\frac{x_{1}^{l-1}-y_{1}^{l-1}}{x_{1}-y_{1}}\\
eq_{3}=&\frac{(\mu_{1}(\epsilon_{1}-\epsilon_{12}))^2\mu_3(1-\epsilon_1)}{x_{1}-y_{1}}\\
\times&\Big(\frac{x_{1}^{l-1}-z_{1}^{l-1}}{x_{1}-z_{1}}-\frac{y_{1}^{l-1}-z_{1}^{l-1}}{y_{1}-z_{1}}\Big)\\
eq_{4}=&\mu_{1}(\epsilon_{1}-\epsilon_{12})\mu_3(1-\epsilon_1)\frac{x_{1}^{l-1}-z_{1}^{l-1}}{x_{1}-z_{1}}.
\end{aligned}
\end{equation*}
By calculation, for $\ell\geq2$,
\begin{align}\label{equ: probability of I}
\Pr(I_1=\ell)=\delta_{1} x_{1}^{\ell-1}+\beta_{1} y_{1}^{\ell-1}
\end{align}
where 
$$\delta_{1}=\mu_1(1-\epsilon_{12})+\frac{\mu_{1}^{2}(\epsilon_{1}-\epsilon_{12})(1-\epsilon_{12})}{-\mu_{1}(\epsilon_{1}-\epsilon_{12})+\mu_{3}(1-\epsilon_{1})}$$
$$\beta_{1}=-\frac{\mu_{1}(\epsilon_{1}-\epsilon_{12})(1-\epsilon_1)}{-\mu_{1}(\epsilon_{1}-\epsilon_{12})+\mu_{3}(1-\epsilon_{1})}\Big({\mu_{1}}+{\mu_3}\Big).$$
Now we check $\Pr(\ell=1)$. Substituting $\ell=1$ into \eqref{equ: probability of I}, we have 
$P(I_{1}=\ell)=\delta_{1}+\beta_{1}=\mu_{1}(1-\epsilon_{1})$, therefore 
\begin{equation*}
P(I_{1}=\ell)=\delta_{1} x_{1}^{\ell-1}+\beta_{1} y_{1}^{\ell-1}
\end{equation*}
holds for all $\ell\geq1$.
\end{proof}

\section{Proof of Lemma \ref{lem: symmetric case}}\label{App: Lemma 3}
\begin{proof}
Taking the first derivative of \eqref{equ: R-symmetric AoI} with respect to $\mu$ and letting it equal to zero, we have
\begin{align*}
&\mu^{(1)}=\frac{1-\epsilon+\sqrt{(\epsilon-\epsilon_{12}(\epsilon))(1-\epsilon)}}{\epsilon_{12}(\epsilon)-2\epsilon+1}\\
&\mu^{(2)}=\frac{1-\epsilon-\sqrt{(\epsilon-\epsilon_{12}(\epsilon))(1-\epsilon)}}{\epsilon_{12}(\epsilon)-2\epsilon+1}.
\end{align*}
If $\epsilon_{12}(\epsilon)-2\epsilon+1=0$, then 
substituting $\epsilon_{12}(\epsilon)=2\epsilon-1$ into \eqref{equ: R-symmetric AoI}, taking the derivative and equating to zero, we obtain $\mu^{(1)}=\mu^{(2)}=1/2$.

First, we simplify $\mu^{(2)}$. If $\epsilon_{12}(\epsilon)-2\epsilon+1\neq0$, then
\begin{align}
\mu^{(2)}=&\frac{1-\epsilon-\sqrt{(\epsilon-\epsilon_{12}(\epsilon))(1-\epsilon)}}{\epsilon_{12}(\epsilon)-2\epsilon+1}\\=&\frac{\sqrt{1-\epsilon}}{\epsilon_{12}(\epsilon)-2\epsilon+1}(\sqrt{1-\epsilon}-\sqrt{\epsilon-\epsilon_{12}(\epsilon)})\\
=&\frac{\sqrt{1-\epsilon}}{\epsilon_{12}(\epsilon)-2\epsilon+1}\cdot\frac{\epsilon_{12}(\epsilon)-2\epsilon+1}{\sqrt{1-\epsilon}+\sqrt{\epsilon-\epsilon_{12}(\epsilon)}}\\=&\frac{\sqrt{1-\epsilon}}{\sqrt{1-\epsilon}+\sqrt{\epsilon-\epsilon_{12}(\epsilon)}}\label{equ: simplify x_2}.
\end{align}
Note that \eqref{equ: simplify x_2} still holds when $\epsilon_{12}(\epsilon)-2\epsilon+1=0$. From \eqref{equ: simplify x_2}, we have $\mu^{(2)}>0$.

Consider the following three cases: 

(1) If $\epsilon_{12}(\epsilon)-2\epsilon+1<0$, then $\mu^{(1)}<0$. Also, 
\begin{align*}
\mu^{(2)}=&\frac{\sqrt{1-\epsilon}}{\sqrt{1-\epsilon}+\sqrt{\epsilon-\epsilon_{12}(\epsilon)}}\\<&\frac{\sqrt{1-\epsilon}}{\sqrt{1-\epsilon}+\sqrt{1-\epsilon}}<1/2.
	\end{align*}
$\mathbb{E}[J^{R}]$ in \eqref{equ: R-symmetric AoI} decreases when $\mu\in[0,\mu^{(2)}]$ and increases when $\mu\in(\mu^{(2)},1/2]$, so the unique minimal point is obtained at $\mu^{*}=\mu^{(2)}$.  

(2) If $\epsilon_{12}(\epsilon)-2\epsilon+1=0$, then  $\mu^{(1)}=\mu^{(2)}=1/2$, and $\mathbb{E}[J^{R}]$ in \eqref{equ: R-symmetric AoI} decreases when $\mu\in[0, 1/2]$. So the unique minimum point is obtained at $\mu^{*}=1/2$.

(3) If $\epsilon_{12}(\epsilon)-2\epsilon+1>0$, then $$\mu^{(1)}>\mu^{(2)}>0.$$ Also, 
\begin{align*}
\mu^{(2)}=&\frac{\sqrt{1-\epsilon}}{\sqrt{1-\epsilon}+\sqrt{\epsilon-\epsilon_{12}(\epsilon)}}\\>&\frac{\sqrt{1-\epsilon}}{\sqrt{1-\epsilon}+\sqrt{1-\epsilon}}>1/2.
\end{align*}
$\mathbb{E}[J^{R}]$ decreases when $\mu\in[0,1/2]$, so the unique minimum point is obtained at $\mu^{*}=1/2$. 
\end{proof}

\section{Proof of Theorem \ref{thm: MW-policy}}\label{App: Theorem 4}
\begin{proof}
We first denote by $d_{i}(k)\in\left\{0,1\right\}$ and $t_{i}(k)\in\left\{0,1\right\}$, the number of packets delivered to user $i$ during slot $k$ from $Q_{1}^{(i)}$ and $Q_{2}^{(i)}$, respectively. We have
\begin{align*}
&d_{i}(k)\leq1\\
&t_{i}(k)\leq1\\
&d_{i}(k)+t_{i}(k)\leq1.
\end{align*} 
Thus from \eqref{equ: BM-w} and \eqref{equ: BM-age},
\begin{equation}\label{equ: MW-recursion of h}
\begin{aligned}
h_{i}(k+1)=&d_{i}(k)+t_{i}(k)(w_{i}(k)+1)\\
+&(1-d_{i}(k)-t_{i}(k))(h_{i}(k)+1).
\end{aligned}
\end{equation}
Using \eqref{equ: MW-recursion of h}, we can re-write the Lyapunov Drift as follows:
\begin{equation*}
\footnotesize
\begin{aligned}
\Theta(\vec{h}(k))=&\mathbb{E}\big[L(\vec{h}(k+1))-L(\vec{h}(k))|\vec{s}(k)\big]
\\=&\frac{1}{2}\sum_{i=1}^{2}\alpha_{i}\mathbb{E}\big[\big((1-d_{i}(k)-t_{i}(k))(h_{i}(k)+1)\\
+&d_{i}(k)+t_{i}(k)(w_{i}(k)+1)\big)^{2}-h_{i}^{2}(k)|\vec{s}(k)\big]\\
\stackrel{(a)}{=}&\frac{1}{2}\sum_{i=1}^{2}\alpha_{i}\mathbb{E}\big[\big((1-d_{i}(k)-t_{i}(k))^{2}(h_{i}(k)+1)^{2}\\
+&d_{i}^{2}(k)+t_{i}^{2}(k)(w_{i}(k)+1)^{2}-h_{i}^{2}(k)|\vec{s}(k)\big)\big]\\
\stackrel{(b)}{=}&\frac{1}{2}\sum_{i=1}^{2}\alpha_{i}\mathbb{E}\big[\big((1-d_{i}(k)-t_{i}(k))\cdot(h_{i}(k)+1)^{2}\\
+&d_{i}(k)+t_{i}(k)(w_{i}(k)+1)^2-h_{i}^{2}(k)|\vec{s}(k)\big)\big]\\
=&\frac{1}{2}\sum_{i=1}^{2}\alpha_{i}\Big(\mathbb{E}[t_{i}(k)|\vec{h}(k)]\big(w_{i}^{2}(k)+2w_{i}(k)-h_{i}^{2}(k)-2h_{i}(k)\big)\\
-&\mathbb{E}[d_{i}(k)|\vec{h}(k)](h_{i}^{2}(k)+2h_{i}(k))+2h_{i}(k)+1\Big).
\end{aligned}
\end{equation*}
(a) holds because from the definition of $d_{i}(k)$ and $t_{i}(k)$, at each slot only one of $d_{i}(k), t_{i}(k), 1-d_{i}(k)-t_{i}(k)$ equals to $1$, and the rest are zero.
(b) holds because $d_{i}^{2}(k)=d_{i}(k)$, $t_{i}^{2}(k)=t_{i}(k)$, and $(1-d_{i}(k)-t_{i}(k))^{2}=1-t_{i}(k)-d_{i}(k)$.

Minimizing $\Theta(\vec{h}(k))$ is equivalent to minimize 
\begin{equation}\label{equ: MW-upper bound-tilde delta}
\footnotesize
\begin{aligned}
\tilde{\Theta}(\vec{h}(k))=&\frac{1}{2}\sum_{i=1}^{2}\alpha_{i}\Big(-\mathbb{E}[d_{i}(k)|\vec{s}(k)]\big(h_{i}^{2}(k)+2h_{i}(k)\big)\\
+&\mathbb{E}[t_{i}(k)|\vec{s}(k)]\big(w_{i}^{2}(k)+2w_{i}(k)-h_{i}^{2}(k)-2h_{i}(k)\big)\Big)
\end{aligned}	
\end{equation}
since $h_{i}(k)$ is a constant once we know $\vec{s}(k)$.
Moreover, from the definition of MW policies, in each slot, $P(A(k)=i)\in\left\{0,1\right\}$ and $\sum_{i=1}^{3}P(A(k)=i)=1$. Therefore, we have
\begin{align}
&\mathbb{E}[d_{i}(k)|\vec{s}(k)]=1_{\{A(k)=i\}}(1-\epsilon_{i})\label{equ: MW upper bound Ed}\\
&\mathbb{E}[t_{i}(k)|\vec{s}(k)]=1_{\{A(k)=3\}}1_{\{w_{i}(k)>0\}}(1-\epsilon_{i}).\label{equ: MW upper bound Et}
\end{align}
Substituting \eqref{equ: MW upper bound Ed} and \eqref{equ: MW upper bound Et} into \eqref{equ: MW-upper bound-tilde delta}, we obtain the desired results.
\end{proof}

\section{Proof of Theorem \ref{thm: MW-the upper bound}}\label{App: Theorem 5}

Before obtaining the upper bound of Max-Weight policies, we consider any randomized policy with probability $\mu_{1}$, $\mu_{2}$, and $\mu_{3}$. Now we calculate the probability of $Q_{2}^{(i)}=\varnothing$. This can be found as a corollary of the following lemma:

\begin{lemma}\label{lem: empty queue}
Consider a queue $Q$ of size $1$. Suppose that packets arrive with a Bernoulli random process of rate $\lambda$ and leave with another (independent) Bernoulli random process of rate $\mu$. Then the probability of empty queue is
$$P(Q=\varnothing)=\frac{\mu(1-\lambda)}{\mu+\lambda-\mu\lambda}.$$
\end{lemma}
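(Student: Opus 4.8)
The plan is to model the size-$1$ queue $Q$ as a two-state discrete-time Markov chain and solve for its stationary distribution. Let the two states be $\varnothing$ (empty) and $\mathtt{F}$ (full, i.e. holding one packet). Since each slot an arrival occurs independently with probability $\lambda$ and a departure occurs independently with probability $\mu$, and both processes are Bernoulli and independent of one another, the one-slot transition behavior depends only on the current state. First I would write down the transition probabilities carefully, paying attention to what happens in a single slot when both an arrival and a departure are scheduled.

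The main modeling decision — and the step I expect to be the only real subtlety — is the convention for simultaneous arrival and departure in a buffer of size $1$, since this determines the exact transition probabilities and hence whether the algebra reproduces the stated formula. From the claimed answer $P(Q=\varnothing)=\tfrac{\mu(1-\lambda)}{\mu+\lambda-\mu\lambda}$, I can reverse-engineer the intended convention. Writing $\pi_\varnothing$ and $\pi_{\mathtt F}$ for the stationary probabilities, the balance equation across the cut between the two states equates the flow out of $\varnothing$ with the flow back. The transition $\varnothing\to\mathtt F$ happens exactly when a packet arrives to the empty queue and is not immediately cleared, and $\mathtt F\to\varnothing$ happens when the held packet departs and is not immediately replaced by a staying arrival. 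The denominator $\mu+\lambda-\mu\lambda=1-(1-\mu)(1-\lambda)$ and the numerator $\mu(1-\lambda)$ indicate the convention in which, from the full state, the queue empties with probability $\mu(1-\lambda)$ (departure with no replacing arrival) and, from the empty state, it fills with probability $\lambda$ (any arrival stays, since there is nothing to depart).

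Concretely I would set up the balance equation
\begin{align*}
\pi_\varnothing\,\lambda=\pi_{\mathtt F}\,\mu(1-\lambda),
\end{align*}
together with the normalization $\pi_\varnothing+\pi_{\mathtt F}=1$. Solving these two linear equations for $\pi_\varnothing$ gives
\begin{align*}
\pi_\varnothing=\frac{\mu(1-\lambda)}{\lambda+\mu(1-\lambda)}=\frac{\mu(1-\lambda)}{\mu+\lambda-\mu\lambda},
\end{align*}
which is exactly the claimed expression, so $P(Q=\varnothing)=\pi_\varnothing$. I would close by noting that the chain is irreducible and aperiodic on its two states (for $0<\lambda,\mu<1$), so the stationary distribution is unique and equals the long-run fraction of time the queue is empty, justifying the identification of $\pi_\varnothing$ with $P(Q=\varnothing)$. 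The entire argument is a short two-state stationary-distribution computation; the only thing requiring care is fixing the simultaneous arrival/departure convention so that the transition rates match, and this is pinned down unambiguously by the target formula.
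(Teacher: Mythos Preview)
Your proposal is correct and follows essentially the same approach as the paper: both model the size-$1$ queue as a two-state Markov chain with transition probabilities $P_{\varnothing\to\mathtt F}=\lambda$ and $P_{\mathtt F\to\varnothing}=\mu(1-\lambda)$, and solve for the stationary distribution. The only cosmetic difference is that the paper writes out the full transition matrix and solves $\pi=\pi P$, whereas you use the equivalent cut/balance equation; your explicit discussion of the simultaneous arrival/departure convention is a nice addition that the paper leaves implicit.
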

\begin{proof}
Since the capacity of the system is $1$, then only two state exists, one is ``an empty queue'', and the other is ``a non-empty queue''. Denote ``an empty queue'' as state~$0$ and ``a non-empty queue'' as state~$1$, and consider a two-state Markov chain. Then, we find the transition probabilities
\begin{equation*}
\begin{aligned}
P_{0\rightarrow1}=&P\{\text{a arrival occurs}\}=\lambda\\
P_{1\rightarrow0}=&P\text{\{a departure occurs but no arrivals occur\}}\\
=&\mu(1-\lambda). 
\end{aligned}
\end{equation*}
Thus the transition probability matrix is
\begin{align*}
P=\left[\begin{matrix}
1-\lambda&\lambda\\
\mu(1-\lambda)&1-\mu+\mu\lambda	
\end{matrix}\right].
\end{align*}
Denote the stationary distribution as $\pi=(\pi_{0},\pi_{1})$, therefore
\begin{align*}
\pi=\pi P\Rightarrow \pi_{0}=\frac{\mu(1-\lambda)}{\mu+\lambda-\mu\lambda}.
\end{align*}
\end{proof}

\begin{corollary}\label{cor: P_empty}
The probability of empty queue in $Q_{2}^{(i)}$ is 
\begin{equation}\label{equ: empty queue}
\footnotesize
\begin{aligned}
P_{empty}^{i}:=\frac{\mu_{3}(1-\epsilon_{i})(1-\mu_{i}(\epsilon_{i}-\epsilon_{12}))}{\mu_{3}(1-\epsilon_{i})+\mu_{i}(\epsilon_{i}-\epsilon_{12})-\mu_{3}\mu_{i}(1-\epsilon_{i})(\epsilon_{i}-\epsilon_{12})}.	
\end{aligned}	
\end{equation}
\end{corollary}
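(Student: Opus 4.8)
The plan is to identify $Q_2^{(i)}$ with the abstract size-$1$ queue of Lemma~\ref{lem: empty queue} and then read off its arrival and departure rates from the state recursion. By Lemma~\ref{lem: buffer size 1} we may assume $Q_2^{(i)}$ stores at most one packet, so its occupancy is precisely the two-state (empty / non-empty) Markov chain already solved in Lemma~\ref{lem: empty queue}; the entire task then reduces to determining the per-slot arrival rate $\lambda$ and departure rate $\mu$ and substituting them into $P(Q=\varnothing)=\frac{\mu(1-\lambda)}{\mu+\lambda-\mu\lambda}$.

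For the arrival rate I would inspect the second branch of the recursion \eqref{equ: BM-w}: a fresh packet enters $Q_2^{(i)}$ exactly when $A(k)=i$ and $(Z_i(k),Z_{\backslash i}(k))=(0,1)$, i.e. action $i$ is played and its packet is erased at the intended user $i$ but overheard at user $\backslash i$. Since the action is chosen with probability $\mu_i$ independently of the channel and $\Pr(Z_i=0,Z_{\backslash i}=1)=\epsilon_i-\epsilon_{12}$, the arrival stream is Bernoulli with $\lambda=\mu_i(\epsilon_i-\epsilon_{12})$.

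For the departure rate I would use the first branch of \eqref{equ: BM-w}: the stored packet is delivered to user $i$, and hence cleared from $Q_2^{(i)}$, when the XOR packet of action $3$ gets through, i.e. $A(k)=3$ and $Z_i(k)=1$, an event of probability $\mu_3(1-\epsilon_i)$; thus the service stream is Bernoulli with $\mu=\mu_3(1-\epsilon_i)$. Plugging these $\lambda$ and $\mu$ into Lemma~\ref{lem: empty queue} gives numerator $\mu_3(1-\epsilon_i)(1-\mu_i(\epsilon_i-\epsilon_{12}))$ and denominator $\mu_3(1-\epsilon_i)+\mu_i(\epsilon_i-\epsilon_{12})-\mu_3\mu_i(1-\epsilon_i)(\epsilon_i-\epsilon_{12})$, which is exactly the claimed $P_{empty}^i$.

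The step I expect to be the main obstacle is the clean identification of the departure (service) stream, together with the independence hypothesis of Lemma~\ref{lem: empty queue}. The recursion \eqref{equ: BM-w} also empties $Q_2^{(i)}$ in the case $A(k)=i,\,Z_i(k)=1$, where a fresh delivery obsoletes the stored packet; to reproduce the lemma's formula one must count only the action-$3$ deliveries as genuine ``services'' and argue that the arrival and service events form the two independent Bernoulli streams the lemma requires. Because these two streams are triggered by different actions ($i$ versus $3$), verifying that the induced transitions really are $P_{0\to1}=\lambda$ and $P_{1\to0}=\mu(1-\lambda)$ — rather than the naive departure rate $(\mu_i+\mu_3)(1-\epsilon_i)$ — is the delicate modelling point, and is what must be pinned down (or adopted as the intended conservative surrogate for the ensuing upper bound on $\mathbb{E}[J^{MW}]$) before the lemma can be invoked.
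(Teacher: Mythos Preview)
Your proposal is correct and matches the paper's proof essentially line for line: the paper also sets $\lambda=\mu_i(\epsilon_i-\epsilon_{12})$ and $\mu=\mu_3(1-\epsilon_i)$ and plugs directly into Lemma~\ref{lem: empty queue}. The modelling subtlety you flag about the additional emptying event $A(k)=i,\,Z_i(k)=1$ is real, but the paper simply ignores it and adopts the same two-stream surrogate without comment; your caveat that this is the ``intended conservative surrogate'' for the Max-Weight upper bound is exactly the right reading.
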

\begin{proof}
Since $Q_{2}^{(i)}$ is the queue in which packets are erased by user $i$ but received user $\backslash{i}$, so the arrival rate is $$\lambda=\mu_{i}(\epsilon_{i}-\epsilon_{12}),$$ and the departure rate $$\mu=\mu_{3}(1-\epsilon_{i}).$$ From Lemma \ref{lem: empty queue},
\begin{equation*}
\footnotesize
\begin{aligned}
P(Q_{2}^{(i)}=\varnothing)=&\frac{\mu_{3}(1-\epsilon_{i})(1-\mu_{i}(\epsilon_{i}-\epsilon_{12}))}{1-(1-\mu_{3}(1-\epsilon_{i}))(1-\mu_{i}(\epsilon_{i}-\epsilon_{12}))}\\
=&\frac{\mu_{3}(1-\epsilon_{i})(1-\mu_{i}(\epsilon_{i}-\epsilon_{12}))}{\mu_{3}(1-\epsilon_{i})+\mu_{i}(\epsilon_{i}-\epsilon_{12})-\mu_{3}\mu_{i}(1-\epsilon_{i})(\epsilon_{i}-\epsilon_{12})}.
\end{aligned}
\end{equation*}
\end{proof}

To obtain the upper bound of MW policies, we manipulate the expression of the one-slot Lyapunov Drift. From the proof of Theorem~\ref{thm: MW-policy}, we have
\begin{equation*}
\footnotesize
\begin{aligned}
\Theta\big(\vec{h}(k)\big)=&\frac{1}{2}\sum_{i=1}^{2}\alpha_{i}\Big(-\mathbb{E}[d_{i}(k)|\vec{s}(k)]\big(h_{i}^{2}(k)+2h_{i}(k)\big)+2h_{i}(k)+1\\
+&\mathbb{E}[t_{i}(k)|\vec{s}(k)]\big(w_{i}^{2}(k)+2w_{i}(k)-h_{i}^{2}(k)-2h_{i}(k)\big)\Big).
\end{aligned}	
\end{equation*}
Consider any randomized policy with probability $(\mu_{1},\mu_{2},\mu_{3})$ in Section \ref{sec: stationary randomized policy}, and denote the probability of the non-empty queue in $Q_{2}^{(i)}$ as $P_{ne}^{i}=1-P_{empty}^{i}$. From the definition of Max-Weight policies, and substituting $\mu_{1}$, $\mu_{2}$, $\mu_{3}$ and $P_{ne}^{i}$ into $\Theta(\vec{h}(k))$, we find 
\begin{equation*}
\footnotesize
\begin{aligned}
\Theta(\vec{h}(k))\leq&\frac{1}{2}\sum_{i=1}^{2}\alpha_{i}\big(-\mu_{i}(1-\epsilon_{i})(h_{i}^{2}(k)+2h_{i}(k))+2h_{i}(k)+1\\
+&\mu_{3}P_{ne}^{i}(1-\epsilon_{i})(w_{i}^{2}(k)+2w_{i}(k)-h_{i}^{2}(k)-2h_{i}(k))\big).
\end{aligned}
\end{equation*}
From Remark~\ref{w<=h-1}, since $w_{k}(k)\leq h_{i}(k)-1$, then
\begin{equation*}
\footnotesize
\begin{aligned}
\Theta(\vec{h}(k))\leq\frac{1}{2}\sum_{i=1}^{2}\alpha_{i}\big(-\mu_{i}(1-\epsilon_{i})(h_{i}(k)-\Phi_{i})^{2}+\Psi_{i}\big)
\end{aligned}	
\end{equation*}
where
\begin{equation*}
\footnotesize
\begin{aligned}
&\Phi_{i}=\frac{1-\mu_{3}P_{ne}^{i}(1-\epsilon_{i})-\mu_{i}(1-\epsilon_{i})}{\mu_{i}(1-\epsilon_{i})}\\
&\Psi_{i}=1-\mu_{3}P_{ne}^{i}(1-\epsilon_{i})+\frac{\big(1-\mu_{i}(1-\epsilon_{i})-\mu_{3}P_{ne}^{i}(1-\epsilon_{i})\big)^{2}}{\mu_{i}(1-\epsilon_{i})}.
\end{aligned}
\end{equation*}
By Cauchy-Schwarz inequality,
\begin{equation*}
\footnotesize
\begin{aligned}
&\sum_{i=1}^{2}\alpha_{i}\mu_{i}(1-\epsilon_{i})(h_{i}(k)-\Phi_{i})^{2}(\sum_{i=1}^{2}\frac{\alpha_{i}}{\mu_{i}(1-\epsilon_{i})})\\
\geq&\bigg(\sum_{i=1}^{2}\alpha_{i}|h_{i}(k)-\Phi_{i}|\bigg)^{2}.
\end{aligned}
\end{equation*} 
Therefore
\begin{equation*}
\footnotesize
\begin{aligned}
\Theta(\vec{h}(k))\leq&-\frac{1}{2}(\sum_{i=1}^{2}\frac{\alpha_{i}}{\mu_{i}(1-\epsilon_{i})})^{-1}\big(\sum_{i=1}^{2}\alpha_{i}|h_{i}(k)-\Phi_{i}|\big)^{2}\\
+&\frac{1}{2}\sum_{i=1}^{2}\alpha_{i}\Psi_{i},
\end{aligned}
\end{equation*}
which implies
\begin{equation*}
\footnotesize	
\begin{aligned}
&\frac{1}{2}\big(\sum_{i=1}^{2}\alpha_{i}|h_{i}(k)-\Phi_{i}|\big)^{2}\\
\leq&-(\sum_{i=1}^{2}\frac{\alpha_{i}}{\mu_{i}(1-\epsilon_{i})})\Theta(\vec{h}(k))+\frac{1}{2}\sum_{i=1}^{2}\frac{\alpha_{i}}{\mu_{i}(1-\epsilon_{i})}\sum_{i=1}^{2}\alpha_{i}\Psi_{i}.
\end{aligned}
\end{equation*}
Summing over $k\in\left\{1,2,\cdots,T\right\}$, taking expectation and dividing by $T$ results in
\begin{equation*}
\footnotesize
\begin{aligned}
&\frac{1}{2T}\bigg(\sum_{k=1}^{T}\mathbb{E}\big(\sum_{i=1}^{2}\alpha_{i}|h_{i}(k)-\Phi_{i}|\big)^{2}\bigg)\\
\leq& -\big(\sum_{i=1}^{2}\frac{\alpha_{i}}{\mu_{i}(1-\epsilon_{i})}\big)\frac{1}{T}\sum_{k=1}^{T}\mathbb{E}[\Theta(\vec{h}(k))]+\frac{1}{2}(\sum_{i=1}^{2}\frac{\alpha_{i}}{\mu_{i}(1-\epsilon_{i})})\sum_{i=1}^{2}\alpha_{i}\Psi_{i}.
\end{aligned}
\end{equation*}
By Jensen's inequality, 
\begin{equation*}
\footnotesize
\begin{aligned}
&\frac{1}{2}\bigg(\frac{1}{T}\sum_{k=1}^{T}\mathbb{E}\big(\sum_{i=1}^{2}\alpha_{i}|h_{i}(k)-\Phi_{i}|\big)\bigg)^{2}\\
\leq& -(\sum_{i=1}^{2}\frac{\alpha_{i}}{\mu_{i}(1-\epsilon_{i})})\frac{1}{T}\sum_{k=1}^{T}\mathbb{E}[\Theta(\vec{h}(k))]+\frac{1}{2}\sum_{i=1}^{2}\frac{\alpha_{i}}{\mu_{i}(1-\epsilon_{i})}\sum_{i=1}^{2}\alpha_{i}\Psi_{i}.
\end{aligned}
\end{equation*}
Since
\begin{align*}
-\frac{1}{T}\sum_{k=1}^{T}\mathbb{E}[\Theta(\vec{h}(k))]\leq\frac{\mathbb{E}[L(\vec{h}(1))]}{T}
\end{align*}
and $\mathbb{E}[L(\vec{h}(1))]$ is a constant, then 
\begin{align*}
\lim_{T\rightarrow\infty}-\frac{1}{T}\sum_{k=1}^{T}\mathbb{E}[\Theta(\vec{h}(k))]=0.
\end{align*}
Thus,
\begin{equation*}
\footnotesize
\begin{aligned}
\mathbb{E}[J^{MW}]\leq \sqrt{\frac{1}{2}\sum_{i=1}^{2}\frac{\alpha_{i}}{\mu_{i}(1-\epsilon_{i})}\sum_{i=1}^{2}\alpha_{i}\Psi_{i}}+\frac{1}{2}\sum_{i=1}^{2}\alpha_{i}\Phi_{i}	
\end{aligned}	
\end{equation*}
where 
\begin{equation*}
\footnotesize
\begin{aligned}
&\Phi_{i}=\frac{1-\mu_{3}P_{ne}^{i}(1-\epsilon_{i})-\mu_{i}(1-\epsilon_{i})}{\mu_{i}(1-\epsilon_{i})}\\
&\Psi_{i}=1-\mu_{3}P_{ne}^{i}(1-\epsilon_{i})+\frac{\big(1-\mu_{i}(1-\epsilon_{i})-\mu_{3}P_{ne}^{i}(1-\epsilon_{i})\big)^{2}}{\mu_{i}(1-\epsilon_{i})}.
\end{aligned}
\end{equation*}

\bibliographystyle{IEEEtran}
\bibliography{reference}

\end{document}